\documentclass[letterpaper, 10 pt, conference]{ieeeconf}

\IEEEoverridecommandlockouts
\usepackage[utf8]{inputenc}
\usepackage{amsmath,amssymb,amsfonts,mathtools}

\usepackage{amsthm}
\usepackage{xcolor}
\usepackage{graphicx}
\usepackage{float}
\usepackage{babel}
\usepackage{algorithm}
\usepackage{algorithmic}
\usepackage{bbm}
\usepackage{booktabs}
\usepackage{caption}
\usepackage{subcaption}
\usepackage{array}
\usepackage[hidelinks]{hyperref}
\makeatletter
\let\labelindent\relax
\makeatother
\usepackage{enumitem}

\newtheorem{theorem}{Theorem}
\newtheorem{lemma}{Lemma}
\newtheorem{proposition}{Proposition}
\newtheorem{corollary}{Corollary}
\newtheorem{definition}{Definition}
\newtheorem{assumption}{Assumption}
\newtheorem{remark}{Remark}

\newcommand{\R}{\mathbb{R}}
\newcommand{\E}{\mathbb{E}}

\newcommand{\Mmet}{\mathcal{M}}
\newcommand{\pistar}{\pi^{*}}
\newcommand{\uMPPI}{u^{\mathrm{MPPI}}}
\newcommand{\uinf}{u^{\infty}}
\newcommand{\Jstar}{J^{*}}
\newcommand{\Vf}{V_f}
\newcommand{\Sigmaw}{\Sigma_w}
\newcommand{\bbar}{\tilde{\beta}}
\newcommand{\dM}{d_{\Mmet}}
\newcommand{\tr}{\mathrm{tr}}

\DeclareMathOperator*{\argmin}{arg\,min}

\begin{document}

\title{Stochastic Stability of Nonlinear MPPI via Contraction Theory and Control Lyapunov Functions}

\author{Hyung-Jin~Yoon\textsuperscript{\dag}~and~Hunmin~Kim\textsuperscript{\ddag}%
  \thanks{\textsuperscript{\dag}H.-J. Yoon is with the Department of Mechanical
  and Nuclear Engineering, Tennessee Technological University, Cookeville, TN,
  USA.}
  \thanks{\textsuperscript{\ddag}H. Kim is with the School of Engineering,
  Department of Electrical and Computer Engineering, Mercer University,
  Macon, GA, USA.}
  \thanks{This work was supported by internal funding at Tennessee
  Technological University.
  This is the second in a three-paper series on MPPI closed-loop
  stability. The companion paper~\cite{yoon2026p1} establishes exponential stability for LTI systems.
  }}

\maketitle

\begin{abstract}
Model Predictive Path Integral (MPPI) control is directly implementable on nonlinear systems because its online update requires only forward rollouts of the dynamics, not gradients, linearizations, or convex optimization. However, this algorithmic flexibility does not by itself provide a closed-loop stability certificate. This paper establishes such a certificate through a stability-inheritance argument. The result should be interpreted as an inheritance theorem, not as an existence theorem for stabilizing nonlinear MPC.
The analysis proceeds in three steps. First, we assume that there exists a deterministic nonlinear MPC policy whose disturbance-free closed loop is certified by a Control Lyapunov Function (CLF) terminal cost and a contraction metric. This policy plays the same analytical role as the LQR controller in the companion LTI result~\cite{yoon2026p1}: it is not computed or used by MPPI online, but serves as a stabilizing reference whose robustness margin MPPI must approximate. Second, we show that finite-sample MPPI approximates this reference policy with high probability, with an error that decomposes into a finite-temperature bias floor and a Monte Carlo term that vanishes as the sample count grows. Third, we show that MPPI inherits the nominal contraction whenever a small-gain condition on the state-dependent approximation gain holds.
The main result establishes finite-horizon, high-probability localized mean practical stability. For any prescribed horizon and confidence level, the closed-loop trajectory remains in a compact sublevel set with high probability, and the expected deviation from the equilibrium decays exponentially up to residual floors caused by MPPI approximation error, Gaussian process noise, and bad sampling events. The paper also provides an ISS-type restatement and an explicit finite-horizon design procedure for choosing the localization set, temperature, and minimum sample count.

\end{abstract}

\section{Introduction}
\label{sec:intro}
Model Predictive Path Integral (MPPI) control~\cite{williams2017mppi,
williams2018infotheo} is a sampling-based receding-horizon method that has
demonstrated strong empirical performance across robotics and autonomous
systems, including off-road navigation~\cite{williams2016aggressive}, legged
locomotion, and aerial vehicles.
At each time step, $M$ random control perturbations are drawn, rolled out
in parallel using only the forward model $x_{i+1}=f(x_i,u_i)$, and
combined via importance weighting to approximate the information-theoretic
optimal control.
This procedure requires no gradient of the cost or dynamics, no
linearization, and no convexity assumption, making it directly applicable
to nonlinear systems.

Formal closed-loop stability guarantees for MPPI, however, remain limited.
A survey by Honda~\cite{honda2026survey} notes:
\begin{quote}
``Convergence and optimality results \ldots do not directly imply
closed-loop stability in the sense of classical MPC theory\ldots
Establishing stability guarantees for PI-MPC remains an open problem.''
\end{quote}
The companion paper~\cite{yoon2026p1} established exponential stability
in expectation for the linear time-invariant (LTI) case, where the LQR
controller provides an explicit stabilizing reference.
The present paper addresses the nonlinear case.
Because no universal stabilizing feedback exists for arbitrary nonlinear
systems, the result is formulated as a stability-inheritance theorem:
we assume the existence of a deterministic nonlinear MPC policy with a
closed-loop stability certificate, and we prove that MPPI inherits this
certificate when its sampling-based update approximates that policy
accurately enough.

\subsection*{Problem Formulation}

The analysis does not ask whether MPPI can stabilize an arbitrary
nonlinear system from first principles.
Instead, it asks the following inheritance question:

\begin{quote}
\emph{If a deterministic nonlinear MPC policy $\pistar(x)$ exists and
its disturbance-free closed loop is certified by a CLF terminal cost and
a contraction metric, and if finite-sample MPPI approximates $\pistar$
with high probability, does the closed-loop system under MPPI inherit
the nominal stability certificate?}
\end{quote}

The answer is affirmative under explicit approximation and small-gain
conditions.
The proof follows three steps that parallel the structure of the companion
LTI result~\cite{yoon2026p1}.
\emph{Step~1}: A deterministic nonlinear MPC policy $\pistar$ is assumed
to exist, with its disturbance-free closed loop certified by a Control
Lyapunov Function (CLF) terminal cost and a contraction metric.
This policy plays the same analytical role as the LQR controller in the
companion paper~\cite{yoon2026p1}: it is not computed or used by MPPI
online, but serves as a stabilizing reference whose robustness margin
MPPI must approximate.
\emph{Step~2}: Finite-sample MPPI is shown to approximate $\pistar$ with
high probability via a two-component error bound consisting of a
finite-temperature bias floor and a Monte Carlo error of order
$O(M^{-1/2})$.
\emph{Step~3}: The contraction margin of $\pistar$ is shown to absorb the
state-dependent part of the MPPI approximation error.
Consequently, for sufficiently large $M$ and sufficiently small
temperature, MPPI inherits the nominal contraction up to additive residual
terms.

Thus, the contribution is not an existence theorem for stabilizing
nonlinear MPC.
Rather, it is a robustness theorem showing that a sampling-based MPPI
implementation preserves a pre-existing nonlinear MPC stability
certificate under explicit finite-sample, finite-temperature, and
stochastic-noise conditions.

\subsection*{Relationship to the companion paper~\cite{yoon2026p1}: Two Additional Difficulties}

Two aspects of the nonlinear setting require techniques beyond those used in the companion paper~\cite{yoon2026p1}.

\begin{enumerate}
\item \emph{From quadratic Lyapunov structure to nonlinear contraction.}
In the LTI case, the DARE Lyapunov function and LQR feedback provide
an explicit stabilizing reference.
In the nonlinear case, such a closed-form reference is generally
unavailable.
We therefore assume that the deterministic nonlinear MPC closed loop
is certified by a contraction metric, which provides a trajectory-level
robustness margin for absorbing MPPI approximation errors.

\item \emph{From global linear bounds to finite-horizon localization.}
In the LTI/quadratic setting, the stability analysis can be carried
out globally.
In the nonlinear stochastic setting, the MPPI concentration constants
and bias bounds are obtained on compact sets, and Gaussian process
noise has unbounded support.
Therefore, no bounded sublevel set can be invariant almost surely over
an infinite horizon.
We instead prove finite-horizon high-probability localization: for
any prescribed horizon $T$ and confidence level $\delta$, the
trajectory remains in a compact sublevel set with probability at least
$1-\delta$.
\end{enumerate}

\subsection*{Contributions}

\begin{enumerate}
  \item \textbf{Stabilizing nonlinear MPC reference
    (Assumptions~\ref{asm:clf}--\ref{asm:contraction}
    and Lemma~\ref{lem:telescoping}).}
    We formalize the deterministic baseline required for the inheritance
    result: a nominal nonlinear MPC policy
    $\pistar(x)=[U^*(x)]_0$ exists and its disturbance-free closed loop is
    stabilizing, as certified by a CLF terminal cost and a contraction
    metric.
    This policy is not part of the MPPI algorithm; it serves as the
    analytical reference controller whose contraction margin MPPI must
    approximate.

  \item \textbf{Two-component MPPI approximation error
    (Lemma~\ref{lem:finitemoments}--\ref{lem:approx}).}
    We decompose the MPPI approximation error into an infinite-sample
    temperature-bias term and a finite-sample Monte Carlo term:
    \begin{equation}
      \label{eq:mppi_approx_intro}
      \begin{aligned}
      \|\uMPPI_k-\pistar(x_k)\|
      &\leq
      \beta_\infty\|x_k-x^*\|+e_M(\eta), \\
      \qquad
      e_M(\eta)&:=\beta_0+\varepsilon_M(\eta).    
      \end{aligned}
    \end{equation}
    The Monte Carlo term satisfies
    $\varepsilon_M(\eta)=O(M^{-1/2})$, while the temperature-bias floor
    $\beta_0$ is reduced by decreasing the MPPI temperature $\lambda$.

  \item \textbf{Small-gain inheritance of contraction
    (Proposition~\ref{prop:contraction_robust}).}
    We show that MPPI inherits the contraction of the nominal MPC policy
    whenever the state-dependent approximation gain is small enough:
    \[
      \Phi(\beta_\infty)
      =
      \sqrt{\frac{\bar\mu}{\mu}}\,L_u\,\beta_\infty
      \leq
      \frac{1-\beta}{2}.
    \]
    Under this condition, the MPPI closed loop remains contractive up to
    additive residual terms caused by finite sampling, finite temperature,
    and process noise.

  \item \textbf{Finite-horizon high-probability localization
    (Lemma~\ref{lem:localization}).}
    Because Gaussian process noise has unbounded support, no bounded
    sublevel set can be invariant almost surely over an infinite horizon.
    Instead, we prove that for every finite horizon $T$ and confidence
    level $\delta$, there exists a compact sublevel set
    $\Omega_R=\{x:\Jstar(x)\leq R\}$ such that
    \[
      \Pr(\tau_R>T)\geq1-\delta.
    \]
    This localization step justifies the use of compact-set concentration
    and contraction constants over the prescribed finite horizon.

  \item \textbf{Localized mean practical stability
    (Theorem~\ref{thm:stability} and Proposition~\ref{prop:iss}).}
    Combining the MPPI approximation bound, the small-gain condition, and
    finite-horizon localization yields a three-floor mean stability bound:
    \begin{equation}
      \label{eq:main_bound_compact}
      \begin{aligned}
        &\E\!\left[
          \|x_k-x^*\|\mathbf{1}_{\{\tau_R>T\}}
        \right] \\
        &\quad\leq
        c\,\bbar^k\|x_0-x^*\|
        +\gamma_M e_M(\eta) \\
        & \qquad +\gamma_w\sqrt{\tr(\Sigmaw)}
        +\gamma_\eta\sqrt{\eta}.
      \end{aligned}
    \end{equation}
    Proposition~\ref{prop:iss} restates this as a finite-horizon
    localized ISS-type bound.

  \item \textbf{Explicit finite-horizon design procedure and P3 interface
    (Corollary~\ref{cor:Mstar}).}
    The certificate yields a practical design procedure: choose the
    localization set for the desired horizon and confidence level, choose
    the MPPI temperature to satisfy the small-gain condition, and then
    choose $M\geq M^*$ to control the Monte Carlo error. The sample
    threshold $M^*$ depends on the compact-set concentration constant but
    not directly on $\Sigmaw$. The process-noise covariance enters through
    the irreducible floor $\gamma_w\sqrt{\tr(\Sigmaw)}$, which provides
    the interface to P3~\cite{yoon2026p3}: online covariance estimation
    tightens the reported stochastic floor without changing $M^*$.
\end{enumerate}

\subsection*{Paper Organization}

Section~\ref{sec:related} reviews related work.
Section~\ref{sec:setup} presents the system model, nominal MPC policy,
CLF/contraction certificates, MPPI control law, and standing assumptions.
Section~\ref{sec:results} gives the main results.
Section~\ref{sec:discussion} discusses scope and design guidelines.
Section~\ref{sec:simulations} presents the numerical experiments.
Section~\ref{sec:conclusion} concludes.

\section{Related Work}
\label{sec:related}

\subsection{Sampling-Based MPC and MPPI}

MPPI~\cite{williams2017mppi,williams2018infotheo} grounds the
importance-weighted update in KL-divergence minimization~\cite{levine2018reinforcement}.
Wagener et al.~\cite{wagener2019online} unify MPPI and CEM as online
learning with Bregman divergences.

\subsection{Theoretical Analysis of MPPI}

\emph{Approximation error.}
Yoon et al.~\cite{yoon2022acc} derived open-loop $O(M^{-1/2})$ bounds
via Hoeffding's and Chebyshev's inequalities.
Lemma~\ref{lem:approx} is the nonlinear closed-loop specialization,
propagated through the contraction recursion.

\emph{Optimizer convergence.}
Yi et al.~\cite{yi2024covompc} (CoVO-MPC) prove the MPPI update contracts
toward the optimal sequence at a linear rate for quadratic costs and design
$\Sigma_\epsilon$ to maximize this rate.
Fazlyab et al.~\cite{fazlyab2026gradient} interpret MPPI as preconditioned
gradient descent on a KL-regularized objective.
Homburger et al.~\cite{homburger2025optimality} characterize optimality
gaps in deterministic and stochastic MPPI.
None provides a Lyapunov-based closed-loop stability certificate;
Honda~\cite{honda2026survey} confirms this is the open problem resolved here.

\emph{Robust MPPI.}
Gandhi et al.~\cite{gandhi2021robust} derive free-energy growth bounds for
a Robust MPPI architecture, but not a Lyapunov-based closed-loop proof.

\subsection{Classical MPC Stability and Contraction}

Mayne et al.~\cite{mayne2000constrained} established the CLF + terminal set
framework for nonlinear MPC stability; extensions to stochastic MPC appear
in~\cite{rawlings2017mpc}.
Lohmiller and Slotine~\cite{lohmiller1998contraction} introduced contraction
analysis; Manchester and Slotine~\cite{manchester2017control} developed
Control Contraction Metrics (CCM) providing a convex SDP.
Reiter et al.~\cite{reiter2023ccm} apply CCM-based terminal costs to MPC.

\subsection{Input-to-State Stability}

ISS~\cite{sontag1989smooth} is the standard framework for robust stability
under bounded disturbances~\cite{mayne2000constrained,rawlings2017mpc}.
Proposition~\ref{prop:iss} places our result within this framework.

\section{Problem Formulation and Assumptions}
\label{sec:setup}

\subsection{Preliminaries}
\label{sec:prelims}


\begin{definition}[Discrete-Time CLF]
\label{def:clf}
A continuously differentiable $\Vf:\R^n\to\R_{\geq 0}$ is a
\emph{CLF} for $x_{k+1}=f(x_k,u_k)$ if
(i) $\alpha_1(\|x\|)\leq\Vf(x)\leq\alpha_2(\|x\|)$ for
class-$\mathcal{K}_\infty$ functions $\alpha_1,\alpha_2$; and
(ii) for every $x\neq 0$ there exists $u\in\R^m$ such that
\begin{equation}
  \label{eq:clf_decrease}
  \Vf(f(x,u)) - \Vf(x) \leq -\ell(x,u).
\end{equation}
\end{definition}
Condition~\eqref{eq:clf_decrease} holds \emph{globally} in $\R^n$,
eliminating the terminal constraint set $\mathcal{X}_f$ required by the
classical Mayne et al.\ framework~\cite{mayne2000constrained}.
The DARE matrix $P$ of the companion paper~\cite{yoon2026p1} is the special case $\Vf(x)=x^\top Px$.

\begin{definition}[Contracting System]
\label{def:contraction}
The system $x_{k+1}=f(x_k,\pistar(x_k))$ is \emph{contracting with rate
$\beta\in(0,1)$} in a Riemannian metric $\Mmet(x)\succ 0$ if
\begin{equation}
  \label{eq:contraction}
  F_k^\top \Mmet(x_{k+1}) F_k \preceq \beta^2 \Mmet(x_k),
  \quad \forall\, x_k\in\R^n,
\end{equation}
where $F_k=\frac{\partial f}{\partial x}+\frac{\partial f}{\partial u}
\frac{\partial\pistar}{\partial x}$ is the closed-loop Jacobian.
\end{definition}
\begin{remark}[Global Consequence]
\label{rem:global}
Condition~\eqref{eq:contraction} implies contraction in the metric
geodesic distance:
\[
  \dM(x_k^{(1)},x_k^{(2)})
  \leq
  \beta^k\,\dM(x_0^{(1)},x_0^{(2)}).
\]
Using $\mu I\preceq\Mmet(x)\preceq\bar\mu I$ gives the associated
Euclidean estimate
\[
  \|x_k^{(1)}-x_k^{(2)}\|
  \leq
  \sqrt{\bar\mu/\mu}\,\beta^k
  \|x_0^{(1)}-x_0^{(2)}\|.
\]
Taking $x_k^{(2)}\equiv x^*$ shows every trajectory converges to
$x^*$ at rate $\beta$, \emph{globally} and independently of initial
conditions.
\end{remark}
\begin{definition}[CCM,~\cite{manchester2017control}]
A uniformly bounded metric $\Mmet(x)$ satisfying~\eqref{eq:contraction}
is a \emph{Control Contraction Metric (CCM)}.
CCMs can be found via an SDP~\cite{manchester2017control,reiter2023ccm}.
\end{definition}
\begin{remark}[Validity of CCM Under Additive Noise]
\label{rem:additive}
Because the noise in~\eqref{eq:system} is strictly additive
($\partial w_k/\partial x_k=0$), the variational dynamics
$\delta x_{k+1}=F_k\,\delta x_k$ are unaffected by $w_k$.
Contraction certifies differential convergence from the deterministic
flow alone, with no It\^{o}-correction term.
If the noise were multiplicative ($\sigma(x_k)w_k$), a Hessian
correction to $\Mmet$ would be required.
\end{remark}

\subsection{System and Cost}

Consider the nonlinear stochastic discrete-time system
\begin{equation}
  \label{eq:system}
  x_{k+1}=f(x_k,u_k)+w_k,
\end{equation}
where $x_k\in\R^n$, $u_k\in\R^m$, and
$f:\R^n\times\R^m\to\R^n$ is continuously differentiable. We consider
regulation to an equilibrium $x^*$ of the disturbance-free dynamics.
Without loss of generality, after a coordinate shift, we take
\begin{equation}
  \label{eq:equilibrium}
  x^*=0,
  \qquad
  f(x^*,0)=x^* .
\end{equation}
The disturbance sequence satisfies
\[
  w_k\sim\mathcal{N}(0,\Sigmaw),
  \qquad
  \Sigmaw\succeq0,
\]
independently across time. The case $\Sigmaw=0$ corresponds to the
noise-free system.

We also define the associated disturbance-free nominal system
\begin{equation}
  \label{eq:nominal_system}
  x_{k+1}=f(x_k,u_k),
\end{equation}
which is used for the MPC prediction model, the CLF condition, and the
contraction condition. The stochastic disturbance $w_k$ in~\eqref{eq:system}
is treated separately in the closed-loop stability analysis.

At time $k$, let
\[
  U
  =
  (u_{0|k},u_{1|k},\ldots,u_{N-1|k})
  \in\R^{mN}
\]
denote an open-loop control sequence. Given $x_{0|k}=x_k$, the predicted
states are generated by the nominal dynamics
\begin{equation}
  \label{eq:nominal_rollout}
  x_{i+1|k}=f(x_{i|k},u_{i|k}),
  \qquad
  i=0,\ldots,N-1 .
\end{equation}
The finite-horizon cost is
\begin{equation}
  \label{eq:cost}
  J(x_k,U)
  =
  \sum_{i=0}^{N-1}
  \ell(x_{i|k},u_{i|k})
  +
  \Vf(x_{N|k}),
\end{equation}
where $\ell:\R^n\times\R^m\to\R_{\geq0}$ is the stage cost and
$\Vf:\R^n\to\R_{\geq0}$ is the terminal cost. We assume
\[
  \ell(x,u)\geq\alpha_\ell(\|x-x^*\|)
\]
for some class-$\mathcal{K}$ function $\alpha_\ell$, and
\[
  \ell(x^*,0)=0,
  \qquad
  \Vf(x^*)=0 .
\]

\subsection{MPPI Control Law}
The same finite-horizon cost $J(x_k,U)$ defined in~\eqref{eq:cost} is used
both by the deterministic nominal MPC problem and by the MPPI importance
weights. Thus, MPPI does not introduce a separate sampling objective. It
inherits the stage cost, CLF terminal cost, and prediction model from the
stabilizing deterministic MPC formulation. This common cost is what makes
the stability-inheritance argument meaningful: the deterministic optimizer
$U^*(x)$ defines the stabilizing reference policy $\pistar(x)=[U^*(x)]_0$,
while MPPI approximates the corresponding cost-weighted update using
sampled rollouts.

At time $k$, let
\[
  \bar U_k
  =
  (\bar u_0,\bar u_1,\ldots,\bar u_{N-1})
  \in\R^{mN}
\]
denote the nominal control sequence. MPPI samples perturbation sequences
\[
  \mathcal{E}^{(j)}
  =
  (\epsilon_0^{(j)},\epsilon_1^{(j)},\ldots,\epsilon_{N-1}^{(j)}),
  \qquad
  \mathcal{E}^{(j)}
  \sim
  \mathcal{N}(0,I_N\otimes\Sigma_\epsilon),
\]
and forms candidate control sequences
\[
  U^{(j)}
  =
  \bar U_k+\mathcal{E}^{(j)}.
\]
The first control in the $j$th sampled sequence is therefore
\[
  u_0^{(j)}
  =
  \bar u_0+\epsilon_0^{(j)}.
\]

The MPPI weights are
\[
  w^{(j)}
  =
  \exp\!\left(
    -\frac{J(x_k,U^{(j)})}{\lambda}
  \right),
\]
where $\lambda>0$ is the temperature. The finite-sample MPPI update is
the self-normalized weighted average of the first sampled controls:
\begin{align}
  \uMPPI_k
  &=
  \frac{\sum_{j=1}^M w^{(j)}u_0^{(j)}}
       {\sum_{j=1}^M w^{(j)}} \notag\\
  &=
  \bar u_0
  +
  \frac{\sum_{j=1}^M w^{(j)}\epsilon_0^{(j)}}
       {\sum_{j=1}^M w^{(j)}} .
  \label{eq:mppi}
\end{align}
Define the finite-sample perturbation update
\[
  \hat\mu_M(x_k,\bar U_k)
  :=
  \frac{\sum_{j=1}^M w^{(j)}\epsilon_0^{(j)}}
       {\sum_{j=1}^M w^{(j)}} .
\]
Then
\[
  \uMPPI_k
  =
  \bar u_0+\hat\mu_M(x_k,\bar U_k).
\]

The corresponding infinite-sample MPPI update is
\[
  \uinf_k
  :=
  \bar u_0+\mu^\infty(x_k,\bar U_k),
  \qquad
  \mu^\infty(x_k,\bar U_k)
  :=
  \frac{\E[w\epsilon_0]}{\E[w]},
\]
where the expectation is with respect to a fresh MPPI perturbation
sequence $\mathcal{E}$.

Finally, the deterministic nominal MPC sequence is
\[
  U^*(x)
  :=
  \argmin_{U\in\R^{mN}} J(x,U),
\]
and the associated nominal MPC feedback is the first block of this
sequence:
\[
  \pistar(x)
  :=
  [U^*(x)]_0 .
\]
The policy $\pistar$ is used only as an analytical stabilizing reference;
the MPPI algorithm itself does not require computing $U^*(x)$. However,
MPPI and the deterministic nominal MPC problem share the same cost
functional $J$. Therefore, the comparison between $\uMPPI_k$ and
$\pistar(x_k)$ is not between two unrelated controllers, but between a
sampling-based implementation and the stabilizing deterministic optimizer
associated with the same CLF-compatible finite-horizon objective.

\subsection{Assumptions}

The deterministic MPC, CLF, and contraction assumptions below are imposed
on the associated disturbance-free nominal system~\eqref{eq:nominal_system}.
The stochastic disturbance $w_k$ in
\eqref{eq:system} is treated separately in the stability analysis.

Let $\mathcal{F}_k$ denote the information available at the beginning of
time step $k$, before the new MPPI samples are drawn.

\begin{remark}[Three-Step Proof Hierarchy]
\label{rem:hierarchy}
The assumptions below correspond directly to the three-step argument
summarised in the introduction.
\emph{Step~1 (baseline stability)}:
Assumptions~\ref{asm:clf}--\ref{asm:contraction} assert that the
deterministic nonlinear MPC policy $\pistar$ exists and that its
disturbance-free closed loop is globally stable.
These are structural conditions on the \emph{problem}---CLF terminal cost
and contracting nominal dynamics---not on the MPPI algorithm.
MPPI can be implemented without knowing $\pistar$ explicitly;
$\pistar$ plays a purely \emph{analytical} role as the stabilizing
reference controller.
\emph{Step~2 (approximation)}:
Assumptions~\ref{asm:lipschitz}--\ref{asm:bounded} provide the regularity
and boundedness conditions needed to obtain uniform finite-sample
concentration of the MPPI update.
\emph{Step~3 (inherited stability)}:
The small-gain condition~\eqref{eq:smallgain} on $\beta_\infty$ ensures
that the contraction margin of $\pistar$ is large enough to absorb the
MPPI approximation error.
This is the condition that links the approximation quality, and hence the
sample size $M$, to closed-loop stability.
\end{remark}
\begin{assumption}[CLF Terminal Cost]
\label{asm:clf}
The terminal cost $\Vf$ is a CLF for the disturbance-free nominal
system~\eqref{eq:nominal_system} in the sense of
Definition~\ref{def:clf}. Moreover, the CLF bounds are quadratic:
\begin{equation}
  \alpha_1(r)=\lambda_{\min}(P)r^2,
  \qquad
  \alpha_2(r)=\lambda_{\max}(P)r^2
\end{equation}
for some $P\succ0$.
\end{assumption}

\begin{assumption}[Nominal MPC Contraction]
\label{asm:contraction}
The closed loop under $\pistar$ for the disturbance-free nominal
system~\eqref{eq:nominal_system} is contracting
(Definition~\ref{def:contraction}) with rate $\beta\in(0,1)$ in a CCM
$\Mmet(x)$ satisfying
\begin{equation}
  \mu I\preceq\Mmet(x)\preceq\bar\mu I,
  \qquad
  0<\mu\leq\bar\mu<\infty .
\end{equation}
\end{assumption}

\begin{assumption}[Lipschitz Dynamics]
\label{asm:lipschitz}
The nominal dynamics $f$ are globally Lipschitz in both state and input:
\begin{equation}
  \|f(x_1,u)-f(x_2,u)\|\leq L_x\|x_1-x_2\|,
\end{equation}
and
\begin{equation}
  \|f(x,u_1)-f(x,u_2)\|\leq L_u\|u_1-u_2\|.
\end{equation}
\end{assumption}

\begin{assumption}[MPPI Sampling]
\label{asm:sampling}
At each time step, MPPI draws $M$ i.i.d.\ perturbation sequences
\begin{equation}
  \epsilon_i^{(j)}\sim\mathcal{N}(0,\Sigma_\epsilon),
  \qquad
  \Sigma_\epsilon\succ0,
\end{equation}
independent across samples and time steps, and uses temperature
$\lambda>0$.
\end{assumption}

\begin{assumption}[Bounded Nominal Sequence]
\label{asm:warmstart}
The nominal control sequence $\bar U_k\in\R^{mN}$ used by MPPI is
$\mathcal{F}_k$-measurable and remains in a compact set
$\mathcal{U}_N\subset\R^{mN}$ for all $k\geq0$. That is,
\begin{equation}
  \bar U_k\in\mathcal{U}_N,
  \qquad
  \operatorname{diam}(\mathcal{U}_N)=D_U<\infty .
\end{equation}
\end{assumption}

\begin{assumption}[Bounded Applied Control]
\label{asm:bounded}
The implemented control applied to the system satisfies
\begin{equation}
  \|u_k\|=\|\uMPPI_k\|\leq\bar u<\infty
  \qquad \text{a.s.}
\end{equation}
for some constant $\bar u$. This bound is enforced by actuator
saturation, projection of the applied control onto a compact input set,
or truncated sampling.
\end{assumption}

\begin{remark}[Role of the Boundedness Assumptions]
Assumption~\ref{asm:warmstart} bounds the nominal sequence around which
MPPI samples. It is used to obtain uniform constants in the bias and
finite-sample concentration bounds.
Assumption~\ref{asm:bounded} bounds the actual control applied to the
plant. It is used in the bad-event part of the stochastic stability proof:
when the MPPI approximation event fails, the applied control is still
bounded, so the one-step Lyapunov contribution remains controlled.
\end{remark}

\begin{remark}[Structural Weight Bounds Without a Separate Effective-Sample-Size Assumption]
\label{rem:weights}
Because $\ell\geq0$ and $\Vf\geq0$, the horizon cost satisfies
$J(x,U)\geq0$ for all $x,U$. Hence the MPPI importance weights satisfy
\begin{equation}
  \label{eq:weight_bound}
  0 < w^{(j)}
  =
  \exp\!\bigl(-J(x_k,U^{(j)})/\lambda\bigr)
  \leq 1
  \qquad\text{a.s.}
\end{equation}
Thus no separate upper-bound assumption on the normalized weights is
needed. The lower control of the self-normalized denominator is obtained
on compact sets through Lemma~\ref{lem:finitemoments}, which gives a
uniform positive lower bound on $\E[w]$.
Together, these facts allow the numerator and denominator of the MPPI
self-normalized estimator to be controlled in
Lemma~\ref{lem:concentration}.

Assumption~\ref{asm:contraction} is verifiable via the CCM
SDP~\cite{manchester2017control}. Assumptions~\ref{asm:warmstart}
and~\ref{asm:bounded} are standard boundedness conditions: the former is
a compactness condition on the nominal MPPI sequence, while the latter is
an actuator/input-bound condition on the applied control.
\end{remark}

\section{Main Results}
\label{sec:results}

The results in this section formalize the stability-inheritance mechanism
summarized in the introduction. The deterministic nominal MPC policy
$\pistar$ is the stabilizing reference controller:
Assumptions~\ref{asm:clf} and~\ref{asm:contraction} provide the CLF and
contraction certificates for its disturbance-free closed loop. MPPI uses
the same finite-horizon cost $J$ as this deterministic MPC problem, but
it does not compute the optimizer $U^*(x)$ directly. Instead, it
approximates the corresponding cost-weighted update through sampling.

The first part of the analysis establishes the MPPI approximation bound.
Lemma~\ref{lem:finitemoments} gives structural moment bounds for the MPPI
weights and a compact-set lower bound on the expected denominator.
Lemma~\ref{lem:concentration} then shows that finite-sample MPPI
concentrates around the infinite-sample MPPI update.
Assumption~\ref{asm:convexity} and Proposition~\ref{prop:bias} connect
this infinite-sample update to the deterministic nominal MPC policy
$\pistar$ by bounding the finite-temperature bias. Combining these two
ingredients gives Lemma~\ref{lem:approx}, which is the key control-error
estimate: finite-sample MPPI approximates $\pistar$ with high probability
up to a state-dependent gain and an additive residual floor.

The second part of the analysis shows how this approximation error affects
stability. Lemma~\ref{lem:telescoping} recalls the nominal MPC decrease
property induced by the CLF terminal cost.
Proposition~\ref{prop:contraction_robust} then shows that the contraction
of the nominal MPC closed loop is robust to MPPI implementation error,
provided the state-dependent approximation gain satisfies the small-gain
condition~\eqref{eq:smallgain}. The state-dependent part of the
approximation error modifies the contraction rate, while the constant
part becomes an additive practical-stability floor.

The final part handles stochasticity and localization. Since Gaussian
process noise has unbounded support, no bounded sublevel set can be
invariant almost surely over an infinite horizon.
Lemma~\ref{lem:localization} therefore establishes finite-horizon
high-probability localization on a compact sublevel set $\Omega_R$. On
this localized event, all compact-set constants used in the approximation
and contraction arguments are valid. Theorem~\ref{thm:stability} combines
these ingredients to prove the main finite-horizon localized mean
practical stability bound. Corollary~\ref{cor:conditional_mean},
Proposition~\ref{prop:iss}, and Corollary~\ref{cor:Mstar} then restate
the result as a conditional mean bound, an ISS-type estimate, and an
explicit finite-horizon design procedure.

\subsection{Finite-Sample Approximation Decomposition}
\begin{lemma}[Finite Weighted Moments and Denominator Lower Bound]
\label{lem:finitemoments}
Suppose Assumptions~\ref{asm:clf}, \ref{asm:sampling}, and
\ref{asm:warmstart} hold. Fix any compact set
$\mathcal{X}\subset\R^n$, and let $\mathcal{U}_N$ be the compact set from
Assumption~\ref{asm:warmstart}. For
$x\in\mathcal{X}$, $\bar U\in\mathcal{U}_N$, and a sampled perturbation
sequence $\mathcal{E}$, define
\[
  w=w(x,\bar U,\mathcal{E})
  :=
  \exp\!\left(
    -\frac{J(x,\bar U+\mathcal{E})}{\lambda}
  \right).
\]
Then the following hold uniformly over
$x\in\mathcal{X}$ and $\bar U\in\mathcal{U}_N$:
\begin{enumerate}[label=(\roman*),leftmargin=*]
  \item $0<w\leq 1$ a.s.;
  \item $\E[w^2]\leq 1$;
  \item $\E[\|w\epsilon_0\|^2]\leq
    \tr(\Sigma_\epsilon)=:C_\epsilon<\infty$;
  \item there exists $Z_0>0$ such that
    \[
      \E[w]\geq Z_0
      \qquad
      \forall\,x\in\mathcal{X},\;\bar U\in\mathcal{U}_N .
    \]
\end{enumerate}
Here the expectations are with respect to the MPPI sampling distribution,
conditional on the fixed pair $(x,\bar U)$.
\end{lemma}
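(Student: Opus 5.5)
Items (i)--(iii) follow directly from the nonnegativity of the cost, so I will dispatch them first and put the real work into (iv). Since $\ell\geq 0$ and $\Vf\geq 0$, we have $J(x,\bar U+\mathcal{E})\in[0,\infty)$ for every realization, as already noted in Remark~\ref{rem:weights}. Hence $w=\exp(-J/\lambda)\in(0,1]$ almost surely, which is (i). Item (ii) then follows from $w^2\leq w\leq 1$. For (iii), I would bound $\|w\epsilon_0\|^2\leq\|\epsilon_0\|^2$ pointwise and take expectations under Assumption~\ref{asm:sampling}: $\E\|\epsilon_0\|^2=\tr(\Sigma_\epsilon)$. None of these three bounds depends on $(x,\bar U)$, so they are uniform.

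For (iv), the plan is to lower bound the weight on a fixed high-probability ball of perturbations, using continuity and compactness. Fix a radius, say $r=1$, and let $B=\{\mathcal{E}:\|\mathcal{E}\|\leq 1\}\subset\R^{mN}$. Under $\mathcal{N}(0,I_N\otimes\Sigma_\epsilon)$ with $\Sigma_\epsilon\succ0$ this set has probability $p_B>0$, and $p_B$ does not depend on $(x,\bar U)$. Define the compact set
\[
  K=\mathcal{X}\times\{\bar U+\mathcal{E}:\bar U\in\mathcal{U}_N,\ \mathcal{E}\in B\}.
\]
This is a continuous image of a product of compact sets, hence compact. The map $(x,U)\mapsto J(x,U)$ is continuous for the following reasons:
\begin{itemize}
  \item $f$ is $C^1$, so the rollout~\eqref{eq:nominal_rollout} is continuous in $(x,U)$ by finite composition;
  \item $\Vf$ is continuous (it is $C^1$ by Definition~\ref{def:clf} and Assumption~\ref{asm:clf});
  \item $\ell$ must be continuous (or at least locally bounded above), which I will take as part of the standing setup.
\end{itemize}
Therefore $\bar J:=\max_K J<\infty$. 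Then for every $x\in\mathcal{X}$ and $\bar U\in\mathcal{U}_N$,
\[
  \E[w]\geq \E[w\mathbf{1}_B]\geq p_B\,e^{-\bar J/\lambda}=:Z_0>0 .
\]

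The main obstacle is step (iv), and specifically the continuity of $J$ uniformly over the compact parameter set. This requires continuity of $\ell$, which the paper does not state explicitly. If $\ell$ were only measurable and nonnegative, I would fall back on the following: it suffices that $\ell$ be bounded on bounded sets, since the rollout states stay in a compact set. That bound follows from the Lipschitz bound of Assumption~\ref{asm:lipschitz}, which gives $\|x_{i+1|k}\|\leq L_x\|x_{i|k}\|+L_u\|u_{i|k}\|$ (using $f(0,0)=0$). The point to emphasize is that $Z_0$ depends on $\mathcal{X}$, $\mathcal{U}_N$, $\lambda$ and $\Sigma_\epsilon$, and degrades like $e^{-\bar J/\lambda}$ as $\lambda\to0$; this is why only a compact-set, not a global, lower bound is claimed.
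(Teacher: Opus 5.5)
Your proof is correct, but for (iv) you take a genuinely different route; (i)--(iii) match the paper's argument.

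\textbf{The paper's route.} The paper sets $Z(x,\bar U):=\E[w]$. It uses continuity of $(x,\bar U)\mapsto J(x,\bar U+\mathcal{E})$ for each fixed $\mathcal{E}$, plus the domination $0<w\leq 1$, to get continuity of $Z$ by dominated convergence. It then notes that $Z>0$ pointwise because the integrand is a.s.\ positive, and defines $Z_0$ as the minimum of $Z$ over the compact set $\mathcal{X}\times\mathcal{U}_N$.

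\textbf{Your route.} You restrict to a fixed perturbation ball $B$ whose mass $p_B$ does not depend on $(x,\bar U)$. You bound $J$ above on the compact set $\mathcal{X}\times(\mathcal{U}_N+B)$, which gives the explicit constant $Z_0=p_B e^{-\bar J/\lambda}$.

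\textbf{What each buys.} The paper's version is shorter and yields the sharpest constant (the exact minimum of $Z$), but it is purely existential. Yours avoids dominated convergence and the extreme-value step. It also needs only that $J$ be bounded above on compact sets, not continuous.

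It also exposes the $\lambda$-dependence of $Z_0$, and that dependence is not an artifact of your crude bound. Since $J(x,\cdot)\geq \Jstar(x)$, even the exact minimum satisfies $Z_0\leq e^{-\max_{\mathcal{X}}\Jstar/\lambda}$. Consequently the sufficient threshold $M_0(\eta)\geq 2Z_0^{-2}\log(4/\eta)$ in Lemma~\ref{lem:concentration}, and the constant $C_{\mathcal{X},\mathcal{U}}$, grow exponentially in $1/\lambda$. This couples the temperature and sample-size design steps, a coupling that the paper's compactness argument hides.

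\textbf{Shared hypothesis and a minor point.} Both proofs tacitly need continuity, or at least local boundedness, of $\ell$. The paper never states this, so you are right to flag it. Your fallback appeals to Assumption~\ref{asm:lipschitz}, which is not among the lemma's hypotheses. It is also unnecessary: continuity of $f$ alone confines the rollout states to a compact set as $(x,U)$ ranges over your $K$.
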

\begin{proof}
Since $\ell\geq0$ and $\Vf\geq0$, the horizon cost satisfies
$J(x,U)\geq0$ for all $x$ and $U$. Therefore
\[
  0<w=e^{-J(x,\bar U+\mathcal{E})/\lambda}\leq 1
  \qquad \text{a.s.},
\]
which proves~(i). Part~(ii) follows immediately from $0<w\leq1$:
\[
  \E[w^2]\leq\E[w]\leq1.
\]
For~(iii), again using $w\leq1$,
\[
  \E[\|w\epsilon_0\|^2]
  \leq
  \E[\|\epsilon_0\|^2]
  =
  \tr(\Sigma_\epsilon),
\]
because $\epsilon_0\sim\mathcal{N}(0,\Sigma_\epsilon)$.

It remains to prove~(iv). Define
\[
  Z(x,\bar U)
  :=
  \E\!\left[
    \exp\!\left(
      -\frac{J(x,\bar U+\mathcal{E})}{\lambda}
    \right)
  \right].
\]
For each fixed perturbation sequence $\mathcal{E}$, the map
$(x,\bar U)\mapsto J(x,\bar U+\mathcal{E})$ is continuous. Moreover,
the integrand satisfies
\[
  0<
  \exp\!\left(
    -\frac{J(x,\bar U+\mathcal{E})}{\lambda}
  \right)
  \leq 1 .
\]
Hence, by dominated convergence, $Z(x,\bar U)$ is continuous on the
compact set $\mathcal{X}\times\mathcal{U}_N$. Since the integrand is
strictly positive a.s., $Z(x,\bar U)>0$ for every
$(x,\bar U)\in\mathcal{X}\times\mathcal{U}_N$. Therefore $Z$ attains a
strictly positive minimum on this compact set. Defining
\[
  Z_0
  :=
  \min_{(x,\bar U)\in\mathcal{X}\times\mathcal{U}_N}
  Z(x,\bar U)
  >0
\]
gives $\E[w]\geq Z_0$ uniformly over
$x\in\mathcal{X}$ and $\bar U\in\mathcal{U}_N$.
\end{proof}

\begin{lemma}[Concentration Around the Infinite-Sample Update]
\label{lem:concentration}
Suppose Assumptions~\ref{asm:clf}, \ref{asm:sampling}, and
\ref{asm:warmstart} hold. Fix a compact set
$\mathcal{X}\subset\R^n$, and let $\mathcal{U}_N$ be the compact set from
Assumption~\ref{asm:warmstart}. Then, for every $\eta\in(0,1)$, there
exist constants $C_{\mathcal{X},\mathcal{U}}>0$ and $M_0(\eta)$ such
that, for all $M\geq M_0(\eta)$ and all
$x_k\in\mathcal{X}$, $\bar U_k\in\mathcal{U}_N$,
\begin{equation}
  \label{eq:concentration}
  \|\uMPPI_k-\uinf_k\|
  \leq
  \varepsilon_M(\eta)
  :=
  C_{\mathcal{X},\mathcal{U}}
  \sqrt{\frac{\log(4m/\eta)}{M}}
\end{equation}
with conditional probability at least $1-\eta$ given $\mathcal{F}_k$.
\end{lemma}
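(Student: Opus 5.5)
Conditional on $\mathcal{F}_k$, the pair $(x_k,\bar U_k)$ is fixed and the $M$ perturbation sequences $\mathcal{E}^{(j)}$ are i.i.d. by Assumption~\ref{asm:sampling}. The estimator is a ratio of two empirical means, so we write
\[
  \hat Z_M:=\tfrac1M\textstyle\sum_j w^{(j)},\qquad
  \hat N_M:=\tfrac1M\textstyle\sum_j w^{(j)}\epsilon_0^{(j)},
\]
with means $Z=\E[w]$ and $N=\E[w\epsilon_0]$. Since $\bar u_0$ cancels, $\uMPPI_k-\uinf_k=\hat N_M/\hat Z_M-N/Z$. We use the standard ratio decomposition
\[
  \frac{\hat N_M}{\hat Z_M}-\frac{N}{Z}
  =\frac{(\hat N_M-N)-\mu^\infty(\hat Z_M-Z)}{\hat Z_M}.
\]
This reduces the claim to three ingredients: concentration of the $m$ coordinates of $\hat N_M$, concentration of the scalar $\hat Z_M$, and a lower bound on $\hat Z_M$ on the good event.

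\textbf{Step 1 (denominator).} Since $0<w\le1$ by Lemma~\ref{lem:finitemoments}(i), Hoeffding gives $|\hat Z_M-Z|\le\sqrt{\log(4m/\eta)/(2M)}$ with probability at least $1-\eta/(2m)$. Lemma~\ref{lem:finitemoments}(iv) gives $Z\ge Z_0$ uniformly on $\mathcal{X}\times\mathcal{U}_N$. We choose $M_0(\eta)$ so that this deviation is at most $Z_0/2$, namely $M_0(\eta)=\lceil 2\log(4m/\eta)/Z_0^2\rceil$. Then $\hat Z_M\ge Z_0/2$ on this event.

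\textbf{Step 2 (numerator).} The summands $w\epsilon_{0,i}$ are not bounded, so we cannot apply Hoeffding directly. However, $|w\epsilon_{0,i}|\le|\epsilon_{0,i}|$ and $\epsilon_{0,i}$ is Gaussian with variance at most $\sigma^2:=\lambda_{\max}(\Sigma_\epsilon)$. Hence each coordinate is sub-Gaussian with parameter of order $\sigma$, uniformly in $(x,\bar U)$; centering only changes the constant. A sub-Gaussian Hoeffding bound for each coordinate $i=1,\dots,m$, with failure probability $\eta/(2m)$ each, gives
\[
  \|\hat N_M-N\|\le c\,\sigma\sqrt{m}\sqrt{\log(4m/\eta)/M}.
\]
Summing the failure probabilities over Steps 1 and 2 gives at most $\eta/(2m)+m\cdot\eta/(2m)\le\eta$. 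This accounts for the $\log(4m/\eta)$ in~\eqref{eq:concentration}.

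\textbf{Step 3 (assemble).} Bound $\|\mu^\infty\|\le \E[w\|\epsilon_0\|]/Z\le\sqrt{C_\epsilon}/Z_0$ using Lemma~\ref{lem:finitemoments}(iii) and (iv) with Cauchy--Schwarz. Substituting Steps 1--2 into the ratio decomposition yields
\[
  \|\uMPPI_k-\uinf_k\|\le\frac{2}{Z_0}\Bigl(c\sigma\sqrt m+\frac{\sqrt{C_\epsilon}}{\sqrt2 Z_0}\Bigr)\sqrt{\frac{\log(4m/\eta)}{M}}.
\]
This defines $C_{\mathcal{X},\mathcal{U}}$. The constant depends on the compact sets only through $Z_0$, so it is uniform, and the probability statement holds conditionally on $\mathcal{F}_k$ because $\bar U_k$ is $\mathcal{F}_k$-measurable.

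The main obstacle is Step 2, the unboundedness of the Gaussian-weighted numerator. The plan is to use sub-Gaussianity, dominating $w\epsilon_0$ by $\epsilon_0$. If that domination step is delicate, because centering a product with $w$ needs care, the fallback is to truncate $\epsilon_0$ at level $\sigma\sqrt{2\log(4mM/\eta)}$ and absorb the tail probability. That route costs an extra logarithmic factor, which would have to be absorbed into $C_{\mathcal{X},\mathcal{U}}$ or into $M_0(\eta)$.
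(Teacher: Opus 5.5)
Your proposal is correct and follows essentially the same route as the paper. Both use Hoeffding on the bounded weights to get $\hat Z_M\ge Z_0/2$, coordinatewise sub-Gaussian concentration of $w\epsilon_0$ (dominated by the Gaussian $\epsilon_0$) with a union bound over the $m$ coordinates, and the same ratio decomposition with $\|\mu^\infty\|\le\sqrt{C_\epsilon}/Z_0$; the only difference is that you assign $\eta/(2m)$ to each event, while the paper assigns $\eta/2$ to the denominator and $\eta/2$ to the numerator. The domination step you were unsure about is standard: tail domination gives a $\psi_2$ bound and centering costs only an absolute constant, so the truncation fallback is not needed.
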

\begin{proof}
Fix $x_k\in\mathcal{X}$ and $\bar U_k\in\mathcal{U}_N$. All expectations
and probabilities in this proof are conditional on $\mathcal{F}_k$.
Let
\begin{equation}
  \hat Z_M
  :=
  \frac{1}{M}\sum_{j=1}^M w^{(j)},
  \qquad
  Z
  :=
  \E[w],
\end{equation}
and
\begin{equation}
  \hat Y_M
  :=
  \frac{1}{M}\sum_{j=1}^M w^{(j)}\epsilon^{(j)}_0,
  \qquad
  Y
  :=
  \E[w\epsilon_0].
\end{equation}
Then
\begin{equation}
  \hat\mu_M
  :=
  \frac{\hat Y_M}{\hat Z_M},
  \qquad
  \mu^\infty
  :=
  \frac{Y}{Z},
\end{equation}
and therefore
\begin{equation}
  \uMPPI_k-\uinf_k
  =
  \hat\mu_M-\mu^\infty .
\end{equation}

\emph{Step 1: Concentration of the denominator.}
By Lemma~\ref{lem:finitemoments}, $0<w^{(j)}\leq1$ a.s. and
$Z=\E[w]\geq Z_0>0$ uniformly on
$\mathcal{X}\times\mathcal{U}_N$. Hoeffding's inequality gives
\begin{equation}
  \Pr\!\left(
    |\hat Z_M-Z|\geq a_M
  \right)
  \leq
  \frac{\eta}{2},
  \qquad
  a_M
  :=
  \sqrt{\frac{\log(4/\eta)}{2M}} .
\end{equation}
Choose
\begin{equation}
  M_0(\eta)
  \geq
  \frac{2}{Z_0^2}\log\!\left(\frac{4}{\eta}\right)
\end{equation}
so that $a_M\leq Z_0/2$ for all $M\geq M_0(\eta)$. On the event
$|\hat Z_M-Z|\leq a_M$, we therefore have
\begin{equation}
  \hat Z_M
  \geq
  Z-a_M
  \geq
  \frac{Z_0}{2}.
\end{equation}

\emph{Step 2: Concentration of the numerator.}
By Lemma~\ref{lem:finitemoments}, each coordinate of
$w^{(j)}\epsilon^{(j)}_0$ is sub-Gaussian with a constant depending only
on $\Sigma_\epsilon$, because $0<w^{(j)}\leq1$ and
$\epsilon^{(j)}_0\sim\mathcal{N}(0,\Sigma_\epsilon)$. Hence, by a
coordinate-wise sub-Gaussian concentration bound and a union bound over
the $m$ control coordinates, there exists a constant $C_Y>0$, depending
only on $\Sigma_\epsilon$ and $m$, such that
\begin{equation}
  \Pr\!\left(
    \|\hat Y_M-Y\|
    \geq
    C_Y\sqrt{\frac{\log(4m/\eta)}{M}}
  \right)
  \leq
  \frac{\eta}{2}.
\end{equation}

\emph{Step 3: Combining the numerator and denominator bounds.}
Let
\begin{equation}
  \mathcal{E}_Z
  :=
  \left\{
    |\hat Z_M-Z|\leq a_M
  \right\}
\end{equation}
and
\begin{equation}
  \mathcal{E}_Y
  :=
  \left\{
    \|\hat Y_M-Y\|
    \leq
    C_Y\sqrt{\frac{\log(4m/\eta)}{M}}
  \right\}.
\end{equation}
By the union bound,
\begin{equation}
  \Pr(\mathcal{E}_Z\cap\mathcal{E}_Y)
  \geq
  1-\eta .
\end{equation}
On $\mathcal{E}_Z\cap\mathcal{E}_Y$,
\begin{align}
  \hat\mu_M-\mu^\infty
  &=
  \frac{\hat Y_M}{\hat Z_M}
  -
  \frac{Y}{Z} \notag\\
  &=
  \frac{\hat Y_M-Y}{\hat Z_M}
  +
  Y\left(
    \frac{1}{\hat Z_M}
    -
    \frac{1}{Z}
  \right).
\end{align}
Therefore,
\begin{align}
  \|\hat\mu_M-\mu^\infty\|
  &\leq
  \frac{\|\hat Y_M-Y\|}{\hat Z_M}
  +
  \|Y\|
  \frac{|\hat Z_M-Z|}{\hat Z_M Z}.
\end{align}
Using $\hat Z_M\geq Z_0/2$, $Z\geq Z_0$, and
$\|Y\|\leq\E[\|w\epsilon_0\|]\leq\sqrt{C_\epsilon}$ from
Lemma~\ref{lem:finitemoments}, we obtain
\begin{align}
  \|\hat\mu_M-\mu^\infty\|
  &\leq
  \frac{2}{Z_0}
  C_Y\sqrt{\frac{\log(4m/\eta)}{M}}
  +
  \frac{2\sqrt{C_\epsilon}}{Z_0^2}
  a_M .
\end{align}
Since
\[
  a_M
  =
  \sqrt{\frac{\log(4/\eta)}{2M}}
  \leq
  \sqrt{\frac{\log(4m/\eta)}{M}},
\]
there exists a constant $C_{\mathcal{X},\mathcal{U}}>0$, depending only
on $Z_0$, $C_Y$, and $C_\epsilon$, such that
\begin{equation}
  \|\hat\mu_M-\mu^\infty\|
  \leq
  C_{\mathcal{X},\mathcal{U}}
  \sqrt{\frac{\log(4m/\eta)}{M}}.
\end{equation}
Since $\uMPPI_k-\uinf_k=\hat\mu_M-\mu^\infty$, this proves
\eqref{eq:concentration}.
\end{proof}

\begin{assumption}[Compact-Set Regularity of the Nominal MPC Optimizer]
\label{asm:convexity}
Fix the compact state set $\mathcal{X}\subset\R^n$ under consideration.
For every $x\in\mathcal{X}$, the finite-horizon cost
$U\mapsto J(x,U)$ satisfies the following properties:
\begin{enumerate}
  \item \emph{(Coercivity.)}
    The cost is coercive in $U$, uniformly over $x\in\mathcal{X}$:
    \begin{equation}
      \label{eq:coercivity}
      \lim_{\|U\|\to\infty}
      \inf_{x\in\mathcal{X}} J(x,U)
      =
      +\infty .
    \end{equation}

  \item \emph{(Global uniqueness.)}
    The map $U\mapsto J(x,U)$ has a unique global minimizer
    $U^*(x)\in\R^{mN}$.

  \item \emph{(Nondegenerate local minimum.)}
    The Hessian at the optimizer is uniformly positive definite:
    \begin{equation}
      \label{eq:hessian_lower_bound}
      \nabla_U^2 J(x,U^*(x))
      \succeq
      \sigma_H I
    \end{equation}
    for some constant $\sigma_H>0$ independent of $x\in\mathcal{X}$.

  \item \emph{(Compatibility with the nominal sequence set.)}
    The compact set $\mathcal{U}_N$ in
    Assumption~\ref{asm:warmstart} is chosen large enough so that
    \begin{equation}
      \label{eq:optimizer_in_UN}
      U^*(x)\in\mathcal{U}_N,
      \qquad
      \forall x\in\mathcal{X}.
    \end{equation}
\end{enumerate}
\end{assumption}

\begin{remark}[Discussion of Assumption~\ref{asm:convexity}]
\label{rem:convexity}
Assumption~\ref{asm:convexity} is not a global convexity assumption on
the nonlinear MPC cost. Rather, it is a compact-set regularity condition:
the finite-horizon cost may be nonconvex in $U$, but it is assumed to
have a unique global minimizer and to be locally strongly convex at that
minimizer.

In the LTI/quadratic case, this condition is automatic. If
$\ell(x,u)=x^\top Qx+u^\top Ru$ with $R\succ0$, then the finite-horizon
cost has the quadratic form
\[
  J(x,U)=x^\top Gx+2x^\top F U+U^\top H U,
\]
with $H\succ0$. Hence the unique minimizer is
\[
  U^*(x)=-H^{-1}F^\top x,
\]
the Hessian is constant,
\[
  \nabla_U^2 J(x,U)=2H,
\]
and the nondegeneracy condition holds with
$\sigma_H=2\lambda_{\min}(H)>0$. Moreover, since $H\succ0$, the cost is
coercive in $U$. On any compact state set $\mathcal{X}$, the optimizer
image $U^*(\mathcal{X})$ is compact, so the compatibility condition
$U^*(x)\in\mathcal{U}_N$ holds after choosing $\mathcal{U}_N$ large
enough.

For nonlinear systems, the same properties are not automatic. Even with
a quadratic stage cost $\ell(x,u)=x^\top Qx+u^\top Ru$ and $R\succ0$,
the nonlinear dependence of the predicted states on the control sequence
can make $U\mapsto J(x,U)$ nonconvex. Thus
Assumption~\ref{asm:convexity} should be understood as a regularity
condition on the nominal MPC problem over the compact state set
$\mathcal{X}$, not as a consequence of the quadratic control penalty
alone.

The assumption is used in two places. First, the nondegenerate Hessian
condition implies, by the implicit function theorem applied to
\[
  \nabla_U J(x,U^*(x))=0,
\]
that the optimizer map $x\mapsto U^*(x)$ is $C^1$ on $\mathcal{X}$,
provided $J$ is sufficiently smooth. Second, the combination of global
uniqueness and local strong convexity at $U^*(x)$ justifies the
small-temperature Laplace argument used to characterize the
infinite-sample MPPI bias.

For fixed $x$ and $\bar U$, define the translated perturbation-space cost
\[
  \widetilde J(\mathcal{E})
  :=
  J(x,\bar U+\mathcal{E}),
\]
where
\[
  \mathcal{E}
  =
  (\epsilon_0,\epsilon_1,\ldots,\epsilon_{N-1})
  \in\R^{mN}
\]
is the stacked MPPI perturbation sequence. The unique global minimizer of
$\widetilde J$ is
\[
  \mathcal{E}^*(x,\bar U)
  =
  U^*(x)-\bar U.
\]
As $\lambda\to0$, the Gibbs measure over perturbations, proportional to
\[
  \exp\!\left(
    -\frac{J(x,\bar U+\mathcal{E})}{\lambda}
  \right),
\]
concentrates on this unique minimizer. Consequently,
\[
  \mu^\infty(x,\bar U)
  \to
  [\mathcal{E}^*(x,\bar U)]_0
  =
  [U^*(x)-\bar U]_0 .
\]
If the cost had multiple global minimizers, the Gibbs measure could split
among them, and the limiting weighted average would generally not equal
the selected optimizer. This is why global uniqueness is required in
addition to local strong convexity.
\end{remark}

\begin{proposition}[Temperature Bias of Infinite-Sample MPPI]
\label{prop:bias}
Suppose Assumptions~\ref{asm:clf}, \ref{asm:sampling},
\ref{asm:warmstart}, and~\ref{asm:convexity} hold. Let
\[
  b_\infty(x_k,\bar U)
  :=
  \|\uinf_k-\pistar(x_k)\|,
\]
where $\uinf_k=\bar u_0+\mu^\infty(x_k,\bar U)$ is the infinite-sample
MPPI control defined in the MPPI control law, $\bar u_0$ is the first
block of $\bar U$, and $\pistar(x)=[U^*(x)]_0$ is the first block of the
nominal MPC optimizer. Then, for every compact
$\mathcal{X}\subset\R^n$ and compact $\mathcal{U}_N$ as in
Assumption~\ref{asm:warmstart}, there exist constants
$\beta_\infty\geq0$ and $\beta_0\geq0$ such that
\begin{equation}
  \label{eq:bias_bound}
  b_\infty(x_k,\bar U)
  \leq
  \beta_\infty\|x_k-x^*\|+\beta_0
  \qquad
  \forall\,x_k\in\mathcal{X},\;\bar U\in\mathcal{U}_N .
\end{equation}

More precisely, define
\[
  g_\lambda(x,\bar U)
  :=
  \bar u_0+\mu^\infty(x,\bar U)-[U^*(x)]_0 .
\]
Then $b_\infty(x,\bar U)=\|g_\lambda(x,\bar U)\|$. Since
$g_\lambda$ is $C^1$ on the compact set
$\mathcal{X}\times\mathcal{U}_N$, it is Lipschitz in $\bar U$; hence
there exists $K_\lambda<\infty$ such that
\begin{equation}
  \label{eq:bias_decomposition}
  b_\infty(x,\bar U)
  \leq
  K_\lambda\|\bar U-U^*(x)\|
  +
  r_\lambda(x),
\end{equation}
where
\[
  r_\lambda(x)
  :=
  b_\infty(x,U^*(x))
  =
  \|\mu^\infty(x,U^*(x))\|
\]
is the intrinsic finite-temperature bias when the nominal sequence is
centered at the optimizer.

Consequently, if $U^*$ is Lipschitz on $\mathcal{X}$ with constant
$K^*$ and $U^*(x^*)=0$, then
\[
  \|\bar U-U^*(x)\|
  \leq
  B_U+K^*\|x-x^*\|,
  \quad
  B_U:=\sup_{\bar U\in\mathcal{U}_N}\|\bar U\|<\infty .
\]
Thus~\eqref{eq:bias_bound} holds with
\[
  \beta_\infty:=K_\lambda K^*,
  \qquad
  \beta_0:=K_\lambda B_U+\sup_{x\in\mathcal{X}}r_\lambda(x).
\]

Moreover:
\begin{enumerate}[label=(\roman*),leftmargin=*]
  \item \emph{(Bias near the optimizer.)}
    If $\bar U$ is close to $U^*(x_k)$, then the warm-start mismatch term
    in~\eqref{eq:bias_decomposition} is small. However, for nonlinear
    costs and finite temperature $\lambda>0$, the residual
    $r_\lambda(x_k)$ need not be zero. In the LTI/quadratic case,
    $r_\lambda(x_k)=0$ exactly because the Gibbs posterior is Gaussian.

  \item \emph{(Small-temperature limit.)}
    Under the unique nondegenerate optimizer condition in
    Assumption~\ref{asm:convexity}, the Laplace principle implies
    \[
      \mu^\infty(x,\bar U)
      \to
      [U^*(x)-\bar U]_0
      \qquad
      \text{as } \lambda\to0
    \]
    uniformly on $\mathcal{X}\times\mathcal{U}_N$. Hence
    $r_\lambda(x)\to0$ uniformly on $\mathcal{X}$. Under the corresponding
    uniform first-derivative Laplace expansion, $K_\lambda\to0$, and
    therefore $\beta_\infty\to0$ and $\beta_0\to0$ as $\lambda\to0$.

  \item \emph{(Small covariance does not generally remove the bias.)}
    Reducing $\|\Sigma_\epsilon\|\to0$ does not, in general, drive
    $b_\infty(x,\bar U)$ to zero. Instead, the perturbation distribution
    collapses around $\mathcal{E}=0$, so
    $\mu^\infty(x,\bar U)\to0$ and $\uinf_k\to\bar u_0$. Unless
    $\bar U$ is already centered at the optimizer, this limit does not
    equal $\pistar(x_k)$.
\end{enumerate}
\end{proposition}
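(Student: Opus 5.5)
The plan is to establish the decomposition \eqref{eq:bias_decomposition} first and then read off \eqref{eq:bias_bound} from it. Everything rests on regularity of $g_\lambda(x,\bar U)=\bar u_0+\mu^\infty(x,\bar U)-[U^*(x)]_0$ on the compact set $\mathcal{X}\times\mathcal{U}_N$.

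\textbf{Regularity of $g_\lambda$.} Writing $\mu^\infty=Y/Z$ with $Y(x,\bar U)=\E[w\epsilon_0]$ and $Z(x,\bar U)=\E[w]$, I would change variables to $V=\bar U+\mathcal{E}$. The expectations become Gaussian integrals
\[
\int e^{-J(x,V)/\lambda}\,\phi(V-\bar U)\,dV,
\]
in which $\bar U$ enters only through the smooth Gaussian density $\phi$. Differentiating under the integral in $\bar U$ is justified by dominated convergence, using $0<w\le1$ and the Gaussian moment bounds of Lemma~\ref{lem:finitemoments}. Joint continuity in $x$ follows by the same dominated-convergence argument used for part (iv) of that lemma. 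Since $Z\ge Z_0>0$ by Lemma~\ref{lem:finitemoments}(iv), the ratio $Y/Z$ is $C^1$ in $\bar U$, and the $\bar U$-gradient is bounded on $\mathcal{X}\times\mathcal{U}_N$. The remaining terms $\bar u_0$ and $[U^*(x)]_0$ are trivially Lipschitz in $\bar U$.

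\textbf{The decomposition and \eqref{eq:bias_bound}.} To handle the segment between $\bar U$ and $U^*(x)$, I would replace $\mathcal{U}_N$ by its convex hull, which is still compact. By Assumption~\ref{asm:convexity}(4), $U^*(x)\in\mathcal{U}_N$, so the segment stays in that hull. The mean value inequality then gives
\[
\|g_\lambda(x,\bar U)-g_\lambda(x,U^*(x))\|\le K_\lambda\|\bar U-U^*(x)\|,
\]
with $K_\lambda$ the supremum of the $\bar U$-derivative over the hull. Since $g_\lambda(x,U^*(x))=\mu^\infty(x,U^*(x))$, which has norm $r_\lambda(x)$, \eqref{eq:bias_decomposition} follows. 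Assuming $U^*$ is Lipschitz with $U^*(x^*)=0$, the triangle inequality gives $\|\bar U-U^*(x)\|\le\|\bar U\|+\|U^*(x)-U^*(x^*)\|\le B_U+K^*\|x-x^*\|$. The function $r_\lambda$ is continuous on compact $\mathcal{X}$, so its supremum is finite, and \eqref{eq:bias_bound} follows with the stated $\beta_\infty$ and $\beta_0$.

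\textbf{The remarks.}
\begin{itemize}
\item[(i)] In the quadratic case the Gibbs-times-Gaussian density is Gaussian. Its mean is the minimizer of $J/\lambda$ plus a quadratic Gaussian penalty centered at $\bar U$. When $\bar U=U^*$ this mean is $U^*$, so $r_\lambda=0$.
\item[(iii)] As $\Sigma_\epsilon\to0$, $\mathcal{E}\to0$ in $L^2$. Since $w\le1$, $\|Y\|\le\E\|\epsilon_0\|\to0$, while $Z\to e^{-J(x,\bar U)/\lambda}>0$. Hence $\mu^\infty\to0$.
\item[(ii)] Apply the Laplace principle to $\int V\,e^{-J(x,V)/\lambda}\phi(V-\bar U)\,dV$. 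Coercivity \eqref{eq:coercivity} and global uniqueness make the mass outside any neighborhood of $U^*(x)$ exponentially negligible relative to the mass near it. The Hessian bound \eqref{eq:hessian_lower_bound} gives a Gaussian local approximation with covariance $O(\lambda)$. Together these yield $\mu^\infty\to[U^*(x)-\bar U]_0$, uniformly in $(x,\bar U)$ because the constants are uniform on compacts. The claim $K_\lambda\to0$ is obtained the same way, applying the expansion to the $\bar U$-derivative; this is the hypothesis ``uniform first-derivative Laplace expansion'' in the statement, which I would simply invoke.
\end{itemize}

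\textbf{Main obstacle.} The hardest step is the uniformity in (ii). The exterior tail estimate requires a positive uniform gap
\[
\inf_{x\in\mathcal{X}}\ \inf_{\|V-U^*(x)\|\ge\rho}\bigl(J(x,V)-J(x,U^*(x))\bigr)>0 .
\]
I would prove this by contradiction: take a minimizing sequence, use coercivity to keep it bounded, extract convergent subsequences by compactness, and use continuity of $U^*$ together with uniqueness of the minimizer to rule out a zero gap.

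A secondary gap is that the statement asserts $g_\lambda\in C^1$ in $x$ via $U^*$. That requires the implicit-function-theorem argument of Remark~\ref{rem:convexity}, which assumes $J$ is $C^2$. I would state that smoothness explicitly rather than derive it.
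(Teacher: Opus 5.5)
Your proposal is correct and follows the paper's proof essentially step for step: you add and subtract $g_\lambda(x,U^*(x))$, apply the mean value inequality with $K_\lambda$ the supremum of $\nabla_{\bar U}g_\lambda$, bound $\|\bar U-U^*(x)\|\le B_U+K^*\|x-x^*\|$, and treat (i)--(iii) by the Gaussian posterior, Laplace's method (invoking the first-derivative version for $K_\lambda\to0$, as the paper does), and collapse of the sampling law. Three of your steps soundly tighten points the paper only asserts: replacing $\mathcal{U}_N$ by its convex hull (the paper applies the mean value theorem on a possibly nonconvex set), changing variables so that $\bar U$-differentiability does not depend on the smoothness of $J$, and giving an explicit uniform-gap argument for the Laplace step.
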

\begin{proof}
\emph{Smoothness of the infinite-sample update.}
For fixed $x\in\mathcal{X}$ and $\bar U\in\mathcal{U}_N$, let
\[
  \mathcal{E}
  =
  (\epsilon_0,\epsilon_1,\ldots,\epsilon_{N-1})
  \in\R^{mN}
\]
denote a stacked MPPI perturbation sequence, with
$\mathcal{E}\sim\mathcal{N}(0,I_N\otimes\Sigma_\epsilon)$. Recall that
\[
  \mu^\infty(x,\bar U)
  =
  \frac{\E[w(\mathcal{E})\epsilon_0]}
       {\E[w(\mathcal{E})]},
  \qquad
  w(\mathcal{E})
  =
  \exp\!\left(
    -\frac{J(x,\bar U+\mathcal{E})}{\lambda}
  \right),
\]
where $\epsilon_0$ is the first control block of $\mathcal{E}$.
By Lemma~\ref{lem:finitemoments}, the denominator satisfies
$\E[w]\geq Z_0>0$ uniformly on
$\mathcal{X}\times\mathcal{U}_N$. Since $J$ is smooth in $(x,U)$ and
$0<w\leq1$, differentiation under the integral sign is justified by the
standard dominated-convergence argument used for exponentially weighted
Gaussian integrals. Hence $(x,\bar U)\mapsto\mu^\infty(x,\bar U)$ is
$C^1$ on $\mathcal{X}\times\mathcal{U}_N$.

Define the infinite-sample bias map
\[
  g_\lambda(x,\bar U)
  :=
  \bar u_0+\mu^\infty(x,\bar U)-[U^*(x)]_0 .
\]
Then
\[
  b_\infty(x,\bar U)=\|g_\lambda(x,\bar U)\|.
\]
Because $g_\lambda$ is $C^1$ on the compact set
$\mathcal{X}\times\mathcal{U}_N$, its derivative with respect to
$\bar U$ is bounded. Define
\[
  K_\lambda
  :=
  \sup_{(x,\bar U)\in\mathcal{X}\times\mathcal{U}_N}
  \left\|
    \nabla_{\bar U}g_\lambda(x,\bar U)
  \right\|
  <\infty .
\]

By Assumption~\ref{asm:convexity}, the optimizer $U^*(x)$ is unique and
nondegenerate. Applying the implicit function theorem to
\[
  \nabla_U J(x,U^*(x))=0
\]
shows that $U^*$ is $C^1$ on $\mathcal{X}$. Thus
\[
  K^*
  :=
  \sup_{x\in\mathcal{X}}
  \|\nabla_x U^*(x)\|
  <\infty .
\]

\emph{Deriving the bias decomposition.}
Add and subtract $g_\lambda(x,U^*(x))$:
\begin{align}
  b_\infty(x,\bar U)
  &=
  \|g_\lambda(x,\bar U)\| \notag\\
  &\leq
  \|g_\lambda(x,\bar U)-g_\lambda(x,U^*(x))\|
  +
  \|g_\lambda(x,U^*(x))\| .
\end{align}
By the mean-value theorem and the definition of $K_\lambda$,
\begin{equation*}
  \begin{aligned}
    &\|g_\lambda(x,\bar U)-g_\lambda(x,U^*(x))\| \\
    &\quad\leq
    K_\lambda\|\bar U-U^*(x)\|.
  \end{aligned}
\end{equation*}

Moreover, since the first block of $U^*(x)$ is $\pistar(x)$,
\begin{equation*}
  \begin{aligned}
    g_\lambda(x,U^*(x))
    &=
    [U^*(x)]_0+\mu^\infty(x,U^*(x))-[U^*(x)]_0 \\
    &=
    \mu^\infty(x,U^*(x)).
  \end{aligned}
\end{equation*}

Therefore,
\[
  \|g_\lambda(x,U^*(x))\|
  =
  r_\lambda(x),
  \qquad
  r_\lambda(x)
  :=
  \|\mu^\infty(x,U^*(x))\|.
\]
Combining the two estimates gives
\begin{equation}
  \label{eq:proof_bias_decomposition}
  b_\infty(x,\bar U)
  \leq
  K_\lambda\|\bar U-U^*(x)\|
  +
  r_\lambda(x).
\end{equation}

\emph{Deriving the affine-in-state bound.}
Let
\[
  B_U
  :=
  \sup_{\bar U\in\mathcal{U}_N}\|\bar U\|
  <\infty .
\]
Since $U^*$ is Lipschitz on $\mathcal{X}$ and $U^*(x^*)=0$, we have
\[
  \|U^*(x)\|
  \leq
  K^*\|x-x^*\|.
\]
If $U^*(x^*)\neq0$, the constant $\|U^*(x^*)\|$ can be absorbed into
$B_U$. Hence,
\[
  \|\bar U-U^*(x)\|
  \leq
  B_U+K^*\|x-x^*\|.
\]
Substituting this into~\eqref{eq:proof_bias_decomposition} yields
\begin{align}
  b_\infty(x,\bar U)
  &\leq
  K_\lambda K^*\|x-x^*\|
  +
  K_\lambda B_U
  +
  \sup_{z\in\mathcal{X}}r_\lambda(z).
\end{align}
Thus~\eqref{eq:bias_bound} holds with
\[
  \beta_\infty:=K_\lambda K^*,
  \qquad
  \beta_0:=
  K_\lambda B_U+\sup_{z\in\mathcal{X}}r_\lambda(z).
\]

\emph{Part~(i): Bias near the optimizer.}
The decomposition~\eqref{eq:proof_bias_decomposition} shows that the
warm-start mismatch contribution is controlled by
$K_\lambda\|\bar U-U^*(x)\|$. Therefore, when $\bar U$ is close to
$U^*(x)$, this part of the bias is small. However, even at
$\bar U=U^*(x)$, the residual
\[
  r_\lambda(x)
  =
  \|\mu^\infty(x,U^*(x))\|
\]
need not be zero for nonlinear costs at finite temperature. In the
LTI/quadratic case, completing the square shows that the Gibbs posterior
is exactly Gaussian centered at the optimizer, so $r_\lambda(x)=0$
identically.

\emph{Part~(ii): Small-temperature limit.}
Fix $x\in\mathcal{X}$ and $\bar U\in\mathcal{U}_N$, and define
\[
  \widetilde J(\mathcal{E})
  :=
  J(x,\bar U+\mathcal{E}).
\]
By Assumption~\ref{asm:convexity}, the map $U\mapsto J(x,U)$ has a
unique global minimizer $U^*(x)$, and the Hessian at this minimizer is
positive definite. Hence $\widetilde J$ has the unique nondegenerate
global minimizer
\[
  \mathcal{E}^*(x,\bar U)
  =
  U^*(x)-\bar U .
\]
By Laplace's method,
\begin{equation}
  \label{eq:laplace}
  \frac{
    \int \mathcal{E}\,
    e^{-\widetilde J(\mathcal{E})/\lambda}
    p(\mathcal{E})\,d\mathcal{E}
  }{
    \int
    e^{-\widetilde J(\mathcal{E})/\lambda}
    p(\mathcal{E})\,d\mathcal{E}
  }
  \longrightarrow
  \mathcal{E}^*(x,\bar U)
  \qquad
  \text{as }\lambda\to0,
\end{equation}
where $p(\mathcal{E})$ is the density of
$\mathcal{N}(0,I_N\otimes\Sigma_\epsilon)$. Taking the first block in
\eqref{eq:laplace} gives
\[
  \mu^\infty(x,\bar U)
  \to
  [\mathcal{E}^*(x,\bar U)]_0
  =
  [U^*(x)-\bar U]_0
  \qquad
  \text{as }\lambda\to0 .
\]
The convergence is uniform on
$\mathcal{X}\times\mathcal{U}_N$ by compactness and the uniform
nondegeneracy in Assumption~\ref{asm:convexity}. Therefore
\[
  r_\lambda(x)
  =
  \|\mu^\infty(x,U^*(x))\|
  \to0
\]
uniformly on $\mathcal{X}$.

Furthermore, the standard uniform first-derivative form of Laplace's
method gives
\[
  \nabla_{\bar U}g_\lambda(x,\bar U)\to0
  \qquad
  \text{uniformly on }
  \mathcal{X}\times\mathcal{U}_N .
\]
Consequently $K_\lambda\to0$, and hence
\[
  \beta_\infty(\lambda)\to0,
  \qquad
  \beta_0(\lambda)\to0
  \qquad
  \text{as }\lambda\to0 .
\]
Therefore, for sufficiently small $\lambda$, the small-gain condition
\[
  \Phi(\beta_\infty)
  \leq
  \frac{1-\beta}{2}
\]
is satisfied.

\emph{Why global uniqueness is required.}
If $\widetilde J$ had multiple distinct global minimizers with the same
minimum value, then as $\lambda\to0$ the Gibbs measure could split among
them. The limiting weighted average would then generally be a convex
combination of their first control blocks rather than the selected
optimizer $[U^*(x)]_0$. Global uniqueness rules out this ambiguity.

\emph{Part~(iii): Small covariance does not generally remove the bias.}
Let $\|\Sigma_\epsilon\|\to0$ with $\lambda>0$ fixed. Then the Gaussian
sampling density $p(\mathcal{E})$ converges weakly to the point mass
$\delta_0$. Consequently,
\[
  \mu^\infty(x,\bar U)
  =
  \frac{\E[w(\mathcal{E})\epsilon_0]}
       {\E[w(\mathcal{E})]}
  \to
  0,
\]
and therefore
\[
  \uinf_k
  =
  \bar u_0+\mu^\infty(x,\bar U)
  \to
  \bar u_0 .
\]
Unless $\bar u_0=\pistar(x)$, this limit does not equal the nominal MPC
feedback. Hence reducing the sampling covariance does not, in general,
remove the infinite-sample bias.

\emph{Comparison to the companion paper~\cite{yoon2026p1}.}
For LTI systems with quadratic cost, $J(x,U)$ is exactly quadratic in
$U$. The Gibbs posterior is then exactly Gaussian, and completing the
square gives a closed-form bias gain as in Proposition~1 of
the companion paper~\cite{yoon2026p1}. In that special case, the intrinsic residual
$r_\lambda$ vanishes identically. For nonlinear systems, the present
argument replaces that closed-form calculation with the compact-set
Lipschitz constant $K_\lambda$ and the Laplace residual $r_\lambda$.
\end{proof}

\begin{lemma}[Two-Component Approximation Error]
\label{lem:approx}
Suppose Assumptions~\ref{asm:clf}, \ref{asm:sampling},
\ref{asm:warmstart}, and~\ref{asm:convexity} hold. Fix a compact state
set $\mathcal{X}\subset\R^n$, and let $\mathcal{U}_N$ be the compact set
from Assumption~\ref{asm:warmstart}. Then, for any $\eta\in(0,1)$ and
all $M\geq M_0(\eta)$, the MPPI control satisfies
\begin{equation}
  \label{eq:approx_bound}
  \begin{aligned}
    \|\uMPPI_k-\pistar(x_k)\|
    &\leq
    b_\infty(x_k,\bar U_k)
    +
    \varepsilon_M(\eta) \\
    &\leq
    \beta_\infty\|x_k-x^*\|
    +
    e_M(\eta),
  \end{aligned}
\end{equation}
with conditional probability at least $1-\eta$ given $\mathcal{F}_k$,
uniformly for $x_k\in\mathcal{X}$ and
$\bar U_k\in\mathcal{U}_N$. Here
\[
  e_M(\eta)
  :=
  \beta_0+\varepsilon_M(\eta),
\]
where $\varepsilon_M(\eta)$ is the finite-sample concentration error from
Lemma~\ref{lem:concentration}, and
\[
  \beta_0
  =
  K_\lambda B_U+\sup_{x\in\mathcal{X}}r_\lambda(x)
\]
is the infinite-sample temperature-bias floor from
Proposition~\ref{prop:bias}. The term $\beta_0$ is irreducible in $M$;
only $\varepsilon_M(\eta)=O(M^{-1/2})$ vanishes as $M\to\infty$.
\end{lemma}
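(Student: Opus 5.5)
The plan is a triangle inequality that splits the control error into a sampling part and a bias part, followed by substitution of the two earlier bounds. Fix $x_k\in\mathcal{X}$ and $\bar U_k\in\mathcal{U}_N$, and condition on $\mathcal{F}_k$ throughout. Because $\bar U_k$ is $\mathcal{F}_k$-measurable (Assumption~\ref{asm:warmstart}), the infinite-sample control $\uinf_k=\bar u_0+\mu^\infty(x_k,\bar U_k)$ is a deterministic quantity given $\mathcal{F}_k$; only $\uMPPI_k$ depends on the fresh samples. We write
\[
  \uMPPI_k-\pistar(x_k)
  =
  \bigl(\uMPPI_k-\uinf_k\bigr)
  +
  \bigl(\uinf_k-\pistar(x_k)\bigr),
\]
so that $\|\uMPPI_k-\pistar(x_k)\|\leq\|\uMPPI_k-\uinf_k\|+b_\infty(x_k,\bar U_k)$, with $b_\infty$ as defined in Proposition~\ref{prop:bias}.

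The first term is handled by Lemma~\ref{lem:concentration}. For every $M\geq M_0(\eta)$, the event $\{\|\uMPPI_k-\uinf_k\|\leq\varepsilon_M(\eta)\}$ has conditional probability at least $1-\eta$. The constants $C_{\mathcal{X},\mathcal{U}}$ and $M_0(\eta)$ depend only on $Z_0$ from Lemma~\ref{lem:finitemoments}, on $\Sigma_\epsilon$, and on $m$, so the bound is uniform over the pair $(x_k,\bar U_k)$. On this event the first inequality in~\eqref{eq:approx_bound} holds directly.

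The second term is deterministic. Proposition~\ref{prop:bias} gives $b_\infty(x_k,\bar U_k)\leq\beta_\infty\|x_k-x^*\|+\beta_0$ on $\mathcal{X}\times\mathcal{U}_N$, with $\beta_0=K_\lambda B_U+\sup_{\mathcal{X}}r_\lambda$. It therefore holds surely and does not cost any additional probability. Substituting it gives the second inequality with $e_M(\eta)=\beta_0+\varepsilon_M(\eta)$.

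The irreducibility remark follows from how the constants depend on $M$. $\beta_0$ depends only on $\lambda$, $\Sigma_\epsilon$, $\mathcal{X}$, and $\mathcal{U}_N$, and not on $M$. By contrast, $\varepsilon_M(\eta)=C_{\mathcal{X},\mathcal{U}}\sqrt{\log(4m/\eta)/M}=O(M^{-1/2})$.

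I expect no real obstacle here. The one point needing care is to confirm that the probability in Lemma~\ref{lem:concentration} is conditional on $\mathcal{F}_k$ with $(x_k,\bar U_k)$ frozen, and that uniformity holds over the compact set. This is what allows the deterministic bias bound to be combined with the high-probability sampling bound on a single event, with no union bound needed.
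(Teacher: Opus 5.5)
Your proposal is correct and follows essentially the same route as the paper's proof. You insert $\uinf_k$ via the triangle inequality and bound the sampling term on the conditional probability-$(1-\eta)$ event of Lemma~\ref{lem:concentration}. You then substitute the deterministic bias bound of Proposition~\ref{prop:bias}, which holds surely on $\mathcal{X}\times\mathcal{U}_N$, so no union bound is needed.
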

\begin{proof}
By the triangle inequality,
\[
  \|\uMPPI_k-\pistar(x_k)\|
  \leq
  \|\uMPPI_k-\uinf_k\|
  +
  \|\uinf_k-\pistar(x_k)\|.
\]
On the high-probability event from Lemma~\ref{lem:concentration},
\[
  \|\uMPPI_k-\uinf_k\|
  \leq
  \varepsilon_M(\eta).
\]
By Proposition~\ref{prop:bias},
\[
  \|\uinf_k-\pistar(x_k)\|
  =
  b_\infty(x_k,\bar U_k)
  \leq
  \beta_\infty\|x_k-x^*\|+\beta_0.
\]
Combining the two inequalities gives
\[
  \|\uMPPI_k-\pistar(x_k)\|
  \leq
  \beta_\infty\|x_k-x^*\|
  +
  \beta_0
  +
  \varepsilon_M(\eta).
\]
Defining $e_M(\eta):=\beta_0+\varepsilon_M(\eta)$ gives
\eqref{eq:approx_bound}. The probability statement follows directly
from Lemma~\ref{lem:concentration}, since the bias bound from
Proposition~\ref{prop:bias} is deterministic on
$\mathcal{X}\times\mathcal{U}_N$.
\end{proof}

\subsection{Telescoping Decrease Under Nominal MPC}
\begin{lemma}[CLF Telescoping Decrease]
\label{lem:telescoping}
Suppose Assumption~\ref{asm:clf} holds and the nominal MPC optimizer
$U^*(x)$ exists. Let
\[
  \Jstar(x):=\min_{U\in\R^{mN}} J(x,U)
\]
denote the nominal MPC value function, and let
\[
  \pistar(x)=[U^*(x)]_0
\]
be the first control of the optimal sequence. Along the disturbance-free
nominal closed loop
\[
  x_{k+1}=f(x_k,\pistar(x_k)),
\]
the value function satisfies
\begin{equation}
  \label{eq:telescope}
  \Jstar(x_{k+1})-\Jstar(x_k)
  \leq
  -\ell(x_k,\pistar(x_k)).
\end{equation}
Thus the nominal MPC value function decreases along the disturbance-free
closed loop.
\end{lemma}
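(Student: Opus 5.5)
The plan is the standard shifted-sequence (candidate-solution) argument from classical MPC stability theory, with the CLF decrease of Assumption~\ref{asm:clf} supplying the tail control. I fix $x_k$ and write $U^*(x_k)=(u^*_0,\ldots,u^*_{N-1})$ for the optimal sequence. Its predicted states are $x^*_{0}=x_k,\ldots,x^*_{N}$, so that $x_{k+1}=f(x_k,u^*_0)=x^*_1$ along the disturbance-free closed loop. Existence of the minimizer is given by hypothesis, so $\Jstar(x_k)=J(x_k,U^*(x_k))$.

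For the candidate at $x_{k+1}$, I use Definition~\ref{def:clf}(ii) at the terminal state $x^*_N$ to choose a control $\kappa$ with $\Vf(f(x^*_N,\kappa))-\Vf(x^*_N)\leq-\ell(x^*_N,\kappa)$. If $x^*_N=0$, I take $\kappa=0$; since $f(0,0)=0$ and $\ell(0,0)=\Vf(0)=0$, the inequality then holds with equality. The candidate sequence is
\[
  \tilde U=(u^*_1,\ldots,u^*_{N-1},\kappa).
\]
From $x_{k+1}=x^*_1$, this sequence reproduces the predicted states $x^*_1,\ldots,x^*_N$ and then reaches $f(x^*_N,\kappa)$.

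Optimality then gives $\Jstar(x_{k+1})\leq J(x_{k+1},\tilde U)$. I compute the right-hand side term by term:
\[
  J(x_{k+1},\tilde U)
  =
  \sum_{i=1}^{N-1}\ell(x^*_i,u^*_i)
  +\ell(x^*_N,\kappa)
  +\Vf(f(x^*_N,\kappa)).
\]
Subtracting $\Jstar(x_k)=\sum_{i=0}^{N-1}\ell(x^*_i,u^*_i)+\Vf(x^*_N)$ leaves
\[
  -\ell(x_k,u^*_0)
  +\bigl[\ell(x^*_N,\kappa)+\Vf(f(x^*_N,\kappa))-\Vf(x^*_N)\bigr].
\]
The bracketed term is nonpositive by the choice of $\kappa$, so it can be dropped. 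Since $u^*_0=\pistar(x_k)$, this yields \eqref{eq:telescope}.

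I expect two points to need care. The first is the degenerate case $x^*_N=0$, which Definition~\ref{def:clf} excludes from condition (ii); the equilibrium choice $\kappa=0$ handles it as described above. The second is that the CLF decrease must hold at the arbitrary terminal state $x^*_N$, with no terminal set. This is exactly what the global decrease property remarked after Definition~\ref{def:clf} provides. For that reason no recursive-feasibility or terminal-constraint argument is needed, and the proof reduces to the single optimality comparison above.
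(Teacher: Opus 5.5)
Your proposal is correct and follows essentially the same argument as the paper: shift the optimal sequence, append a CLF control at the terminal predicted state, use optimality to bound $\Jstar(x_{k+1})$ by the candidate cost, and cancel the terminal terms via the CLF decrease. Your explicit treatment of the case $x^*_N=0$, where you take $\kappa=0$ and use the equilibrium identities, is a small refinement; the paper instead appends $\kappa_f(x^*_{N|k})$ without defining it at the origin.
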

\begin{proof}
Let
\[
  U^*(x_k)
  =
  (u^*_{0|k},u^*_{1|k},\ldots,u^*_{N-1|k})
\]
be the optimal sequence at time $k$, with corresponding nominal predicted
states
\[
  x^*_{0|k}=x_k,
  \qquad
  x^*_{i+1|k}=f(x^*_{i|k},u^*_{i|k}).
\]
The nominal MPC policy applies
\[
  \pistar(x_k)=u^*_{0|k},
\]
so the next disturbance-free state is
\[
  x_{k+1}=x^*_{1|k}.
\]

Construct a candidate sequence at time $k+1$ by shifting the previous
optimal sequence and appending the CLF control:
\[
  \widetilde U_{k+1}
  =
  (u^*_{1|k},u^*_{2|k},\ldots,u^*_{N-1|k},
  \kappa_f(x^*_{N|k})).
\]
Since $\Jstar(x_{k+1})$ is the minimum cost from $x_{k+1}$,
\[
  \Jstar(x_{k+1})
  \leq
  J(x_{k+1},\widetilde U_{k+1}).
\]
Expanding the right-hand side gives
\begin{align}
  J(x_{k+1},\widetilde U_{k+1})
  &=
  \sum_{i=1}^{N-1}
  \ell(x^*_{i|k},u^*_{i|k})
  +
  \ell(x^*_{N|k},\kappa_f(x^*_{N|k})) \notag\\
  &\quad
  +
  \Vf\!\left(
    f(x^*_{N|k},\kappa_f(x^*_{N|k}))
  \right).
\end{align}
By the CLF terminal condition,
\[
  \Vf\!\left(
    f(x^*_{N|k},\kappa_f(x^*_{N|k}))
  \right)
  -
  \Vf(x^*_{N|k})
  \leq
  -\ell(x^*_{N|k},\kappa_f(x^*_{N|k})).
\]
Therefore,
\[
  \ell(x^*_{N|k},\kappa_f(x^*_{N|k}))
  +
  \Vf\!\left(
    f(x^*_{N|k},\kappa_f(x^*_{N|k}))
  \right)
  \leq
  \Vf(x^*_{N|k}).
\]
Hence
\[
  \Jstar(x_{k+1})
  \leq
  \sum_{i=1}^{N-1}
  \ell(x^*_{i|k},u^*_{i|k})
  +
  \Vf(x^*_{N|k}).
\]
On the other hand,
\[
  \Jstar(x_k)
  =
  \ell(x_k,\pistar(x_k))
  +
  \sum_{i=1}^{N-1}
  \ell(x^*_{i|k},u^*_{i|k})
  +
  \Vf(x^*_{N|k}).
\]
Subtracting the two inequalities gives
\[
  \Jstar(x_{k+1})-\Jstar(x_k)
  \leq
  -\ell(x_k,\pistar(x_k)),
\]
which proves~\eqref{eq:telescope}.
\end{proof}

\subsection{Contraction Robustness and Small-Gain Condition}
\begin{definition}[Small-Gain Function]
\label{def:smallgain}
For the disturbance-free nominal dynamics~\eqref{eq:nominal_system}, let
$L_u$ be the input Lipschitz constant from Assumption~\ref{asm:lipschitz},
and let $\mu,\bar\mu$ be the CCM bounds from
Assumption~\ref{asm:contraction}. Given the bias gain
$\beta_\infty$ from Proposition~\ref{prop:bias}, define
\begin{equation}
  \label{eq:Phi}
  \Phi(\beta_\infty)
  :=
  \sqrt{\frac{\bar\mu}{\mu}}\,
  L_u\,\beta_\infty .
\end{equation}
\end{definition}

For each time $k$, define the good MPPI approximation event
\begin{equation}
  \label{eq:good_event}
  \mathcal{G}_k
  :=
  \left\{
  \|u_k-\pistar(x_k)\|
  \leq
  \beta_\infty\|x_k-x^*\|+e_M(\eta)
  \right\}.
\end{equation}
By Lemma~\ref{lem:approx}, on the compact set under consideration,
$\Pr(\mathcal{G}_k\mid\mathcal{F}_k)\geq1-\eta$.

\begin{proposition}[Robustness of Contraction --- Trajectory-Level Argument]
\label{prop:contraction_robust}
Under Assumptions~\ref{asm:contraction} and~\ref{asm:lipschitz}, suppose
the per-step control error satisfies
\begin{equation}
  \label{eq:good_event_bound}
  \|u_k - \pistar(x_k)\| \leq \beta_\infty\|x_k - x^*\| + e_M(\eta)
\end{equation}
for all $k$ (the good event $\mathcal{G}_k$, cf.\ Lemma~\ref{lem:approx}).
If the small-gain condition
\begin{equation}
  \label{eq:smallgain}
  \Phi(\beta_\infty) \leq \frac{1-\beta}{2}
\end{equation}
holds, then on $\mathcal{G}_k$ the one-step distance to equilibrium satisfies
\begin{equation}
  \label{eq:one_step_contraction}
  \|x_{k+1} - x^*\|_\Mmet
  \leq \bbar\,\|x_k - x^*\|_\Mmet
  + \sqrt{\bar\mu}\,L_u\,e_M(\eta),
\end{equation}
where $\bbar := \beta + \Phi(\beta_\infty) \leq (1+\beta)/2 < 1$.
\end{proposition}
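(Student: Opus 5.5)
The plan is a decompose-and-compare argument: write the perturbed step as the nominal closed-loop step plus a control-mismatch correction, bound the nominal part by contraction and the mismatch by Lipschitz continuity and the good-event bound. Throughout, $\|z\|_\Mmet$ is read as the metric-weighted length of the displacement $z$, with the norm equivalence
\[
\sqrt{\mu}\,\|z\|\leq\|z\|_\Mmet\leq\sqrt{\bar\mu}\,\|z\|,
\]
which follows from Assumption~\ref{asm:contraction}. The process-noise term is set aside here, as the proposition concerns the disturbance-free one-step map $x_{k+1}=f(x_k,u_k)$; the noise is added back later in Theorem~\ref{thm:stability}. I also use that $x^*=0$ is a fixed point of the nominal closed loop, i.e.\ $\pistar(x^*)=0$ and $f(x^*,\pistar(x^*))=x^*$.

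Add and subtract the nominal closed-loop successor:
\[
x_{k+1}-x^*=\bigl[f(x_k,\pistar(x_k))-f(x^*,\pistar(x^*))\bigr]+\bigl[f(x_k,u_k)-f(x_k,\pistar(x_k))\bigr].
\]
Take the metric norm and apply the triangle inequality. For the first bracket I invoke Remark~\ref{rem:global} with the two nominal closed-loop trajectories started at $x_k$ and at $x^*$. After one step, the geodesic distance contracts by $\beta$, so the first bracket is at most $\beta\,\|x_k-x^*\|_\Mmet$.

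For the second bracket, Assumption~\ref{asm:lipschitz} and norm equivalence give
\[
\bigl\|f(x_k,u_k)-f(x_k,\pistar(x_k))\bigr\|_\Mmet\leq\sqrt{\bar\mu}\,L_u\,\|u_k-\pistar(x_k)\|.
\]
On $\mathcal{G}_k$, I bound the control error by \eqref{eq:good_event_bound}. Then I convert $\|x_k-x^*\|\leq\|x_k-x^*\|_\Mmet/\sqrt{\mu}$, which gives
\[
\sqrt{\bar\mu}\,L_u\,\beta_\infty\|x_k-x^*\|\leq\sqrt{\bar\mu/\mu}\,L_u\,\beta_\infty\|x_k-x^*\|_\Mmet=\Phi(\beta_\infty)\|x_k-x^*\|_\Mmet,
\]
together with the additive term $\sqrt{\bar\mu}\,L_u\,e_M(\eta)$. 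Summing the two brackets yields \eqref{eq:one_step_contraction} with $\bbar=\beta+\Phi(\beta_\infty)$. The small-gain condition \eqref{eq:smallgain} then gives $\bbar\leq\beta+(1-\beta)/2=(1+\beta)/2<1$.

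The main obstacle is the first bracket: identifying the metric "norm" with the geodesic distance $\dM$ used in Remark~\ref{rem:global}. For a state-dependent $\Mmet$, $\|x-x^*\|_\Mmet$ is not literally the geodesic distance, and the triangle inequality has to be applied in a single consistent metric. My first approach is to define $\|x-x^*\|_\Mmet:=\dM(x,x^*)$ throughout, since $\dM$ is a genuine metric and satisfies the triangle inequality. For the Lipschitz term, I bound $\dM(a,b)\leq\sqrt{\bar\mu}\,\|a-b\|$ by taking the straight-line path, and use $\dM(x,x^*)\geq\sqrt{\mu}\,\|x-x^*\|$ in the conversion above. The triangle inequality is then applied to $\dM$ through the intermediate point $f(x_k,\pistar(x_k))$, and the contraction step uses Remark~\ref{rem:global} directly. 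With this reading, the chain of inequalities above goes through unchanged.
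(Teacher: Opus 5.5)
Your proposal is correct and follows the paper's proof. The paper likewise works with the geodesic distance $\dM$ and applies the triangle inequality through $f(x_k,\pistar(x_k))$. It bounds the nominal term by $\beta\,\dM(x_k,x^*)$ using contraction, bounds the mismatch term by $\sqrt{\bar\mu}\,L_u\|u_k-\pistar(x_k)\|$, and converts $\|x_k-x^*\|\leq\dM(x_k,x^*)/\sqrt{\mu}$ to obtain $\bbar=\beta+\Phi(\beta_\infty)$; your reading $\|x-x^*\|_\Mmet:=\dM(x,x^*)$ is exactly the identification the paper makes implicitly when it states the result.
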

\begin{proof}
\emph{No differentiation of MPPI is used.}
The argument is entirely trajectory-level. It uses the contraction of the
nominal closed loop and the input Lipschitz property of the dynamics, but
does not require differentiating the random MPPI policy.

Fix a time $k$ and suppose that the good event $\mathcal{G}_k$ holds.
Let
\[
  d_k:=u_k-\pistar(x_k).
\]
Then, by the definition of $\mathcal{G}_k$,
\begin{equation}
  \label{eq:dk_good_bound}
  \|d_k\|
  \leq
  \beta_\infty\|x_k-x^*\|+e_M(\eta).
\end{equation}

Since the proposition concerns the disturbance-free update,
\[
  x_{k+1}=f(x_k,u_k).
\]
Also, because $x^*$ is the equilibrium of the nominal closed loop,
\[
  x^*
  =
  f(x^*,\pistar(x^*)).
\]
Using the triangle inequality for the geodesic distance $\dM$,
\begin{align}
  \dM(x_{k+1},x^*)
  &=
  \dM(f(x_k,u_k),f(x^*,\pistar(x^*))) \notag\\
  &\leq
  \dM(f(x_k,u_k),f(x_k,\pistar(x_k))) \notag\\
  &\quad
  +
  \dM(f(x_k,\pistar(x_k)),f(x^*,\pistar(x^*))).
  \label{eq:geo_triangle}
\end{align}

\emph{Nominal contraction term.}
By Assumption~\ref{asm:contraction}, the nominal closed loop
$x^+=f(x,\pistar(x))$ is contracting with rate $\beta$. Therefore,
\begin{equation}
  \label{eq:nominal_contraction_distance}
  \dM(f(x_k,\pistar(x_k)),f(x^*,\pistar(x^*)))
  \leq
  \beta\,\dM(x_k,x^*).
\end{equation}

\emph{Control-deviation term.}
By the upper metric bound $\Mmet(x)\preceq\bar\mu I$, the geodesic
distance is bounded above by the Euclidean distance:
\[
  \dM(y,z)\leq\sqrt{\bar\mu}\|y-z\|.
\]
Hence, using the input Lipschitz property of $f$,
\begin{align}
  &\dM(f(x_k,u_k),f(x_k,\pistar(x_k))) \notag\\
  &\quad\leq
  \sqrt{\bar\mu}\,
  \|f(x_k,u_k)-f(x_k,\pistar(x_k))\| \notag\\
  &\quad\leq
  \sqrt{\bar\mu}\,L_u\|u_k-\pistar(x_k)\| \notag\\
  &\quad=
  \sqrt{\bar\mu}\,L_u\|d_k\|.
  \label{eq:control_deviation_distance}
\end{align}
Using~\eqref{eq:dk_good_bound},
\begin{align}
  &\dM(f(x_k,u_k),f(x_k,\pistar(x_k))) \notag\\
  &\quad\leq
  \sqrt{\bar\mu}\,L_u
  \left(
    \beta_\infty\|x_k-x^*\|+e_M(\eta)
  \right).
\end{align}
By the lower metric bound $\mu I\preceq\Mmet(x)$,
\[
  \sqrt{\mu}\|x_k-x^*\|
  \leq
  \dM(x_k,x^*),
\]
and therefore
\[
  \|x_k-x^*\|
  \leq
  \frac{1}{\sqrt{\mu}}\dM(x_k,x^*).
\]
Thus,
\begin{align}
  &\dM(f(x_k,u_k),f(x_k,\pistar(x_k))) \notag\\
  &\quad\leq
  \sqrt{\frac{\bar\mu}{\mu}}\,
  L_u\beta_\infty\,\dM(x_k,x^*)
  +
  \sqrt{\bar\mu}\,L_u e_M(\eta) \notag\\
  &\quad=
  \Phi(\beta_\infty)\,\dM(x_k,x^*)
  +
  \sqrt{\bar\mu}\,L_u e_M(\eta).
  \label{eq:control_deviation_final}
\end{align}

\emph{Combining the two terms.}
Substituting~\eqref{eq:nominal_contraction_distance} and
\eqref{eq:control_deviation_final} into~\eqref{eq:geo_triangle} gives
\[
  \dM(x_{k+1},x^*)
  \leq
  \bigl(\beta+\Phi(\beta_\infty)\bigr)\dM(x_k,x^*)
  +
  \sqrt{\bar\mu}\,L_u e_M(\eta).
\]
By definition,
\[
  \bbar:=\beta+\Phi(\beta_\infty).
\]
Under the small-gain condition~\eqref{eq:smallgain},
\[
  \bbar
  \leq
  \beta+\frac{1-\beta}{2}
  =
  \frac{1+\beta}{2}
  <1.
\]
Therefore,
\[
  \dM(x_{k+1},x^*)
  \leq
  \bbar\,\dM(x_k,x^*)
  +
  \sqrt{\bar\mu}\,L_u e_M(\eta),
\]
which proves~\eqref{eq:one_step_contraction}.
\end{proof}

\subsection{High-Probability Lyapunov Sublevel Set Invariance}
\begin{lemma}[Finite-Horizon High-Probability Localization]
\label{lem:localization}
Fix a horizon $T\in\mathbb{N}$ and confidence level $\delta\in(0,1)$.
Let
\[\begin{aligned}
  \Omega_R &:=\{x\in\R^n:\Jstar(x)\leq R\}, \\
  \tau_R &:=\inf\{k\geq0:x_k\notin\Omega_R\}.  
\end{aligned}
\]
Under Assumptions~\ref{asm:clf}--\ref{asm:bounded},
Assumption~\ref{asm:convexity}, and the small-gain condition
\eqref{eq:smallgain}, there exists a radius $R=R(T,\delta,x_0)$ such
that
\[
  \Pr(\tau_R>T)\geq 1-\delta .
\]
Consequently, with probability at least $1-\delta$, the closed-loop
trajectory remains in the compact set $\Omega_R$ over the finite horizon
$0,\ldots,T$.
\end{lemma}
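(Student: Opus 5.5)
The plan is a crude worst-case growth bound plus a Gaussian tail bound. It does not use the contraction at all; it only needs Assumptions~\ref{asm:lipschitz} and~\ref{asm:bounded}, which hold globally. The key observation is that localization over a \emph{finite} horizon needs no Lyapunov decrease. It is enough that the state cannot escape to infinity in $T$ steps unless the noise is large.

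\textbf{Step 1: deterministic growth bound.} Since $\|u_k\|\le\bar u$ almost surely and $f(x^*,0)=x^*=0$, Assumption~\ref{asm:lipschitz} gives
\[
\|f(x_k,u_k)\|\le\|f(x_k,u_k)-f(0,u_k)\|+\|f(0,u_k)-f(0,0)\|\le L_x\|x_k\|+L_u\bar u .
\]
Hence $\|x_{k+1}\|\le L_x\|x_k\|+L_u\bar u+\|w_k\|$. Unrolling this recursion yields, for all $k\le T$,
\[
\|x_k\|\le L^T\|x_0\|+T L^T\bigl(L_u\bar u+\max_{j<T}\|w_j\|\bigr),
\qquad L:=\max(L_x,1).
\]

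\textbf{Step 2: noise tail.} Each $w_j\sim\mathcal{N}(0,\Sigmaw)$ satisfies the Gaussian norm concentration bound
\[
\Pr\bigl(\|w_j\|\ge\sqrt{\tr(\Sigmaw)}+\sqrt{2\|\Sigmaw\|\log(T/\delta)}\bigr)\le\delta/T .
\]
A union bound over $j=0,\ldots,T-1$ shows that, with probability at least $1-\delta$, every $\|w_j\|\le r_\delta$, where $r_\delta:=\sqrt{\tr(\Sigmaw)}+\sqrt{2\|\Sigmaw\|\log(T/\delta)}$. On this event, Step~1 gives $\|x_k\|\le\rho$ for all $k\le T$, where $\rho=\rho(T,\delta,x_0)$ is an explicit finite number. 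Alternatively, Markov's inequality on $\sum_j\|w_j\|^2$ would suffice.

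\textbf{Step 3: convert the ball to a sublevel set.} I would choose $R:=\sup_{\|x\|\le\rho}\Jstar(x)$. To see that $R$ is finite, evaluate $J$ at the feasible sequence $U=0$: $J(x,0)$ is a continuous function of $x$, so it is bounded on the compact ball, and $\Jstar\le J(\cdot,0)$. Then $\{\|x\|\le\rho\}\subseteq\Omega_R$, so $\{\max_{k\le T}\|x_k\|\le\rho\}\subseteq\{\tau_R>T\}$, which gives $\Pr(\tau_R>T)\ge1-\delta$.

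It remains to check that $\Omega_R$ is compact. It is closed because $\Jstar$ is lower semicontinuous: it is a minimum of a continuous function, and coercivity makes the minimum attained locally uniformly. It is bounded because, with $U=U^*(x)$, $\Jstar(x)\ge\ell(x,\pistar(x))\ge\alpha_\ell(\|x\|)$.

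\textbf{Main obstacle.} Boundedness of $\Omega_R$ requires $\alpha_\ell$ to be unbounded, i.e.\ class-$\mathcal{K}_\infty$. If $\alpha_\ell$ is only class-$\mathcal{K}$, I would use $\Vf\ge\lambda_{\min}(P)\|x_{N|k}\|^2$ from Assumption~\ref{asm:clf} combined with Lipschitz backward reasoning. If that fails, I would replace $\Omega_R$ by $\Omega_R\cap\{\|x\|\le\rho\}$. A secondary subtlety is that $R$ grows like $L_x^{T}$. This is acceptable for the lemma as stated, but a sharper $R$ would require using the contraction from Proposition~\ref{prop:contraction_robust} on good events together with a supermartingale argument, which I would not attempt here.
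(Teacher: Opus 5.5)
Your proof is correct and shares the paper's backbone. Neither argument uses the contraction or small-gain hypotheses; both rest on the growth bound $\|x_{k+1}-x^*\|\le L_x\|x_k-x^*\|+L_u\bar u+\|w_k\|$ from Assumptions~\ref{asm:lipschitz} and~\ref{asm:bounded}.

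Where you differ is in turning this into a radius. The paper argues qualitatively. It asserts that $\Jstar$ is finite and continuous on bounded sets, concludes that $Z_T=\max_{k\le T}\Jstar(x_k)$ is almost surely finite, and then uses $\Pr(Z_T\le R)\to1$ to get some $R$. You instead use Gaussian norm concentration and a union bound to get an explicit ball radius $\rho$. You then bound $\Jstar$ on that ball by the elementary estimate $\Jstar\le J(\cdot,0)$, which needs no regularity of $\Jstar$ at all.

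The paper's route is shorter and works for any noise with almost surely finite norm. Your Markov alternative recovers most of that generality. Your route is constructive, and more importantly it gives an $R$ that depends only on $T,\delta,x_0$, the constants $L_x,L_u,\bar u,\Sigmaw$, and $J$. It holds uniformly over every controller with $\|u_k\|\le\bar u$.

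The paper's $R$ is a quantile of the closed-loop law, so it implicitly depends on $\lambda$ and $M$; the experiments indeed report $R$ changing with $M$. This makes Theorem~\ref{thm:stability} and Corollary~\ref{cor:Mstar} potentially circular, since $M_0(\eta;R)$ is computed on $\Omega_R$ while $R$ is supposed to be fixed first. Your version is what actually justifies writing $R=R(T,\delta,x_0)$. The cost is a conservative radius growing like $L_x^{T}$.

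Your extra discussion of compactness goes beyond the paper, which simply asserts that $\Omega_R$ is compact. The boundedness issue you flag, that $\alpha_\ell$ must be unbounded, is genuine. However, the ``Lipschitz backward reasoning'' fallback cannot work: Lipschitz continuity only upper-bounds $\|x_{i+1|k}\|$, so it cannot turn $\Vf(x_{N|k})\ge\lambda_{\min}(P)\|x_{N|k}\|^2$ into a lower bound that grows in $\|x\|$. Also, closedness must come entirely from locally uniform coercivity (a Berge-type argument), because a pointwise minimum of continuous functions is only upper semicontinuous.

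None of this affects the probability bound $\Pr(\tau_R>T)\ge1-\delta$ itself, which your Steps 1--3 establish.
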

\begin{proof}
Fix a finite horizon $T\in\mathbb{N}$ and confidence level
$\delta\in(0,1)$. Define the random variable
\[
  Z_T
  :=
  \max_{0\leq k\leq T}\Jstar(x_k).
\]
We first show that $Z_T<\infty$ almost surely.

By Assumption~\ref{asm:bounded}, the applied controls satisfy
\[
  \|u_k\|\leq \bar u
  \qquad\text{a.s.}
\]
Using the Lipschitz property of $f$ and the equilibrium relation
$f(x^*,0)=x^*$, we obtain
\[
\begin{aligned}
\|x_{k+1}-x^*\|
  &=
  \|f(x_k,u_k)+w_k-x^*\| \\
  &\leq
  L_x\|x_k-x^*\|+L_u\bar u+\|w_k\|.
\end{aligned}
\]
Since each $w_k$ is Gaussian, $\|w_k\|<\infty$ almost surely. Therefore,
by induction over the finite horizon $0,\ldots,T$,
\[
  \max_{0\leq k\leq T}\|x_k-x^*\|<\infty
  \qquad\text{a.s.}
\]
The nominal MPC value function $\Jstar$ is finite and continuous on
bounded sets by the feasibility and regularity assumptions on the
finite-horizon optimal control problem. Hence
\[
  Z_T
  =
  \max_{0\leq k\leq T}\Jstar(x_k)
  <\infty
  \qquad\text{a.s.}
\]

Since $Z_T$ is finite almost surely,
\[
  \Pr(Z_T\leq R)\to 1
  \qquad
  \text{as } R\to\infty.
\]
Therefore, there exists $R=R(T,\delta,x_0)>0$ such that
\[
  \Pr(Z_T\leq R)\geq 1-\delta.
\]
By the definition of the sublevel set
\[
  \Omega_R=\{x\in\R^n:\Jstar(x)\leq R\},
\]
the event $\{Z_T\leq R\}$ is exactly the event that
\[
  x_k\in\Omega_R
  \qquad
  \forall\,k=0,\ldots,T.
\]
Equivalently,
\[
  \{Z_T\leq R\}
  =
  \{\tau_R>T\}.
\]
Hence
\[
  \Pr(\tau_R>T)\geq 1-\delta.
\]
This proves the finite-horizon high-probability localization claim.
\end{proof}

\begin{remark}[The Central Idea]
\label{rem:central}
The deterministic nonlinear MPC policy $\pistar$ is used only as an
analytical stabilizing reference. MPPI inherits stability from $\pistar$
when the state-dependent part of the MPPI approximation error is small
enough to be absorbed by the nominal contraction margin. Specifically,
the small-gain condition
\[
  \Phi(\beta_\infty)
  =
  \sqrt{\frac{\bar\mu}{\mu}}\,L_u\,\beta_\infty
  \leq
  \frac{1-\beta}{2}
\]
ensures that the perturbed MPPI closed loop remains contractive with
rate
\[
  \bbar
  :=
  \beta+\Phi(\beta_\infty)
  \leq
  \frac{1+\beta}{2}
  <1 .
\]
The remaining MPPI approximation error and the process noise appear as
additive residual floors.
\end{remark}

\subsection{Main Stability Theorem}

Let $T\in\mathbb{N}$ be a fixed finite horizon and let
$\Omega_R=\{x\in\R^n:\Jstar(x)\leq R\}$ be the compact sublevel set
chosen by Lemma~\ref{lem:localization} so that
\[
  \Pr(\tau_R>T)\geq1-\delta.
\]
Define
\[
  S_R
  :=
  \sup_{x\in\Omega_R}\|x-x^*\|<\infty
\]
and
\begin{equation}
  \label{eq:Cbad_revised}
  C_{\mathrm{bad}}(R)
  :=
  3\bar\mu
  \left(
    L_x^2 S_R^2
    +
    L_u^2\bar u^2
    +
    \tr(\Sigmaw)
  \right).
\end{equation}

\begin{theorem}[Finite-Horizon Localized Mean Practical Stability]
\label{thm:stability}
Suppose Assumptions~\ref{asm:clf}--\ref{asm:bounded} and
Assumption~\ref{asm:convexity} hold. Fix a finite horizon
$T\in\mathbb{N}$ and confidence levels $\delta,\eta\in(0,1)$.
Let $R=R(T,\delta,x_0)$ be chosen as in
Lemma~\ref{lem:localization}, so that
\[
  \Pr(\tau_R>T)\geq1-\delta.
\]
Assume that the small-gain condition~\eqref{eq:smallgain} holds and that
$M\geq M_0(\eta;R)$, where the approximation constants are computed on
the compact set $\Omega_R$. Then, for every $0\leq k\leq T$,

\begin{equation}
  \label{eq:main_bound_revised}
  \begin{aligned}
    &\E\!\left[
      \|x_k-x^*\|\mathbf{1}_{\{\tau_R>T\}}
    \right] \\
    &\quad\leq
    c\,\bbar^k\|x_0-x^*\|
    +\gamma_M e_M(\eta) \\  &\qquad+\gamma_w\sqrt{\tr(\Sigmaw)}
    +\gamma_\eta\sqrt{\eta}.
  \end{aligned}
\end{equation}
where
\begin{equation}
  \label{eq:constants_revised}
  c
  :=
  \sqrt{\frac{\bar\mu}{\mu}},
  \quad
  \gamma_M
  :=
  \frac{\sqrt{\bar\mu/\mu}\,L_u}{1-\bbar},
  \quad
  \gamma_w
  :=
  \frac{\sqrt{\bar\mu/\mu}}{1-\bbar},
\end{equation}
and
\begin{equation}
  \label{eq:gamma_eta_revised}
  \gamma_\eta
  :=
  \frac{\sqrt{C_{\mathrm{bad}}(R)}}{\sqrt{\mu}(1-\bbar)} .
\end{equation}
\end{theorem}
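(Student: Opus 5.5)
We work with the metric distance $D_k:=\dM(x_k,x^*)$ and the stopped indicator $\chi_k:=\mathbf{1}_{\{\tau_R>k\}}$. Since $\{\tau_R>T\}\subseteq\{\tau_R>k\}$ for $k\le T$, and $\|x-x^*\|\le D/\sqrt{\mu}$, it suffices to bound $\E[D_k\chi_k]$ by a linear recursion and divide by $\sqrt{\mu}$ at the end.

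On $\{\tau_R>k\}$ we have $x_k\in\Omega_R$, so Lemma~\ref{lem:approx} applies on the compact set $\Omega_R$ (with $M\ge M_0(\eta;R)$). Hence $\Pr(\mathcal{G}_k\mid\mathcal{F}_k)\ge1-\eta$ on that event. We split the one-step increment according to $\mathcal{G}_k$ and its complement.

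\textbf{Good event.} Write $x_{k+1}=f(x_k,u_k)+w_k$. By the triangle inequality for $\dM$ and $\dM(y,z)\le\sqrt{\bar\mu}\|y-z\|$, Proposition~\ref{prop:contraction_robust} applied to the disturbance-free part gives
\[
D_{k+1}\le\bbar D_k+\sqrt{\bar\mu}L_u e_M(\eta)+\sqrt{\bar\mu}\|w_k\|.
\]
Since $w_k$ is independent of $\mathcal{F}_k$, Jensen's inequality gives $\E\|w_k\|\le\sqrt{\tr(\Sigmaw)}$.

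\textbf{Bad event.} We only use crude bounds. From the Lipschitz estimate in Lemma~\ref{lem:localization},
\[
D_{k+1}\le\sqrt{\bar\mu}\bigl(L_xS_R+L_u\bar u+\|w_k\|\bigr),
\]
and hence $\E[D_{k+1}^2\mid\mathcal{F}_k]\le C_{\mathrm{bad}}(R)$ by $(a+b+c)^2\le3(a^2+b^2+c^2)$. Cauchy--Schwarz then yields
\[
\E\bigl[D_{k+1}\mathbf{1}_{\mathcal{G}_k^c}\mid\mathcal{F}_k\bigr]\le\sqrt{C_{\mathrm{bad}}(R)}\sqrt{\eta}.
\]
This is exactly where the $\sqrt{\eta}$ floor comes from.

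\textbf{Combining.} Use $\chi_{k+1}\le\chi_k$, take conditional expectations, and drop the indicator $\mathbf{1}_{\mathcal{G}_k}\le1$ in front of $\bbar D_k$. This gives
\[
\E[D_{k+1}\chi_{k+1}]\le\bbar\,\E[D_k\chi_k]+\sqrt{\bar\mu}L_ue_M(\eta)+\sqrt{\bar\mu}\sqrt{\tr(\Sigmaw)}+\sqrt{C_{\mathrm{bad}}(R)}\sqrt{\eta}.
\]
Iterating and using $\sum_j\bbar^j\le1/(1-\bbar)$, together with $D_0\le\sqrt{\bar\mu}\|x_0-x^*\|$, we get
\[
\E[D_k\chi_k]\le\sqrt{\bar\mu}\,\bbar^k\|x_0-x^*\|+\frac{\text{(sum of residuals)}}{1-\bbar}.
\]
Dividing by $\sqrt{\mu}$ reproduces exactly $c,\gamma_M,\gamma_w,\gamma_\eta$ in \eqref{eq:constants_revised}--\eqref{eq:gamma_eta_revised}.

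\textbf{Main obstacle.} The delicate step is the conditioning bookkeeping. $\mathcal{G}_k$ depends on the fresh MPPI samples, which are independent of $w_k$. Lemma~\ref{lem:approx} holds only when $x_k\in\Omega_R$ and $\bar U_k\in\mathcal{U}_N$, and both are guaranteed by $\chi_k$ together with Assumption~\ref{asm:warmstart}. We also need $\Omega_R\subseteq\{\|x-x^*\|\le S_R\}$ to hold on the stopped event, so that the bad-event bound uses $S_R$.

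We must also check that Proposition~\ref{prop:contraction_robust}, as stated for the disturbance-free map, can be applied pointwise before adding the noise. This uses the metric triangle inequality $\dM(f+w,x^*)\le\dM(f,x^*)+\sqrt{\bar\mu}\|w\|$, which holds because $\Mmet\preceq\bar\mu I$.
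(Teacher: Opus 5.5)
Your proposal is correct and follows essentially the same route as the paper's proof: you bound the stopped metric distance $\E[\dM(x_k,x^*)\mathbf{1}_{\{k<\tau_R\}}]$ by splitting on $\mathcal{G}_k$, using Proposition~\ref{prop:contraction_robust} plus a $\sqrt{\bar\mu}\|w_k\|$ noise term on the good event and Cauchy--Schwarz against $C_{\mathrm{bad}}(R)$ on the bad event, then unroll the scalar recursion and divide by $\sqrt{\mu}$. Your concern about independence between $\mathcal{G}_k$ and $w_k$ is harmless but unnecessary, since bounding $\mathbf{1}_{\mathcal{G}_k}\le 1$ only requires $w_k$ to be independent of $\mathcal{F}_k$.
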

\begin{proof}
Let
\[
  D_k:=\dM(x_k,x^*)
\]
denote the CCM geodesic distance from $x_k$ to the equilibrium. Also
define
\[
  \chi_k:=\mathbf{1}_{\{k<\tau_R\}}.
\]
Since
\[
  \{\tau_R>T\}\subseteq\{k<\tau_R\}
  \qquad
  \text{for every }0\leq k\leq T,
\]
it is enough to bound
\[
  Y_k:=\E[D_k\chi_k].
\]

Fix $k<T$. Since
\[
  \chi_{k+1}\leq \chi_k,
\]
we have
\[
  Y_{k+1}
  =
  \E[D_{k+1}\chi_{k+1}]
  \leq
  \E[D_{k+1}\chi_k].
\]
On the event $\{k<\tau_R\}$, we have $x_k\in\Omega_R$, so the compact-set
MPPI approximation result applies. Let
\[
  \mathcal{G}_k
  =
  \left\{
  \|u_k-\pistar(x_k)\|
  \leq
  \beta_\infty\|x_k-x^*\|+e_M(\eta)
  \right\}.
\]
By Lemma~\ref{lem:approx},
\[
  \Pr(\mathcal{G}_k^c\mid\mathcal{F}_k)\leq\eta.
\]

We split the conditional expectation into good and bad events:
\[
  \E[D_{k+1}\chi_k\mid\mathcal{F}_k]
  =
  \E[D_{k+1}\chi_k\mathbf{1}_{\mathcal{G}_k}\mid\mathcal{F}_k]
  +
  \E[D_{k+1}\chi_k\mathbf{1}_{\mathcal{G}_k^c}\mid\mathcal{F}_k].
\]

\emph{Good-event term.}
On $\{k<\tau_R\}\cap\mathcal{G}_k$, Proposition~\ref{prop:contraction_robust}
gives the disturbance-free bound
\[
  \dM(f(x_k,u_k),x^*)
  \leq
  \bbar D_k+\sqrt{\bar\mu}L_u e_M(\eta).
\]
The actual stochastic update is
\[
  x_{k+1}=f(x_k,u_k)+w_k.
\]
Using the upper metric bound,
\[
  \dM(f(x_k,u_k)+w_k,f(x_k,u_k))
  \leq
  \sqrt{\bar\mu}\|w_k\|.
\]
Therefore,
\[
  D_{k+1}
  \leq
  \bbar D_k
  +
  \sqrt{\bar\mu}L_u e_M(\eta)
  +
  \sqrt{\bar\mu}\|w_k\|
\]
on $\{k<\tau_R\}\cap\mathcal{G}_k$. Taking conditional expectation and
using
\[
  \E[\|w_k\|]\leq \sqrt{\tr(\Sigmaw)}
\]
gives
\begin{align}
  &\E[D_{k+1}\chi_k\mathbf{1}_{\mathcal{G}_k}
    \mid\mathcal{F}_k] \notag\\
  &\quad\leq
  \chi_k
  \left(
    \bbar D_k
    +
    \sqrt{\bar\mu}L_u e_M(\eta)
    +
    \sqrt{\bar\mu}\sqrt{\tr(\Sigmaw)}
  \right).
  \label{eq:good_event_main_theorem}
\end{align}

\emph{Bad-event term.}
On $\{k<\tau_R\}$, we have $x_k\in\Omega_R$, so
$\|x_k-x^*\|\leq S_R$. Also, by Assumption~\ref{asm:bounded},
$\|u_k\|\leq\bar u$ a.s. Using the Lipschitz property of $f$,
\[
  \|x_{k+1}-x^*\|
  \leq
  L_xS_R+L_u\bar u+\|w_k\|.
\]
Hence, by $(a+b+c)^2\leq3(a^2+b^2+c^2)$ and the upper metric bound,
\[
  \E[D_{k+1}^2\chi_k\mid\mathcal{F}_k]
  \leq
  C_{\mathrm{bad}}(R).
\]
Therefore, by Cauchy's inequality,
\begin{align}
  &\E[D_{k+1}\chi_k\mathbf{1}_{\mathcal{G}_k^c}
    \mid\mathcal{F}_k] \notag\\
  &\quad\leq
  \left(
    \E[D_{k+1}^2\chi_k\mid\mathcal{F}_k]
  \right)^{1/2}
  \left(
    \Pr(\mathcal{G}_k^c\mid\mathcal{F}_k)
  \right)^{1/2} \notag\\
  &\quad\leq
  \sqrt{C_{\mathrm{bad}}(R)}\,\sqrt{\eta}.
  \label{eq:bad_event_main_theorem}
\end{align}

Combining~\eqref{eq:good_event_main_theorem} and
\eqref{eq:bad_event_main_theorem}, then taking total expectation, gives
\[
  Y_{k+1}
  \leq
  \bbar Y_k
  +
  \sqrt{\bar\mu}L_u e_M(\eta)
  +
  \sqrt{\bar\mu}\sqrt{\tr(\Sigmaw)}
  +
  \sqrt{C_{\mathrm{bad}}(R)}\sqrt{\eta}.
\]
Unrolling this scalar recursion yields
\[
  Y_k
  \leq
  \bbar^k Y_0
  +
  \frac{
    \sqrt{\bar\mu}L_u e_M(\eta)
    +
    \sqrt{\bar\mu}\sqrt{\tr(\Sigmaw)}
    +
    \sqrt{C_{\mathrm{bad}}(R)}\sqrt{\eta}
  }{1-\bbar}.
\]
Since
\[
  Y_0
  =
  \dM(x_0,x^*)
  \leq
  \sqrt{\bar\mu}\|x_0-x^*\|,
\]
we obtain
\begin{equation*}
  \begin{aligned}
    Y_k
    &\leq
    \sqrt{\bar\mu}\,\bbar^k\|x_0-x^*\| \\
    &\quad+
    \frac{
      \sqrt{\bar\mu}L_u e_M(\eta)
      +\sqrt{\bar\mu}\sqrt{\tr(\Sigmaw)}
      +\sqrt{C_{\mathrm{bad}}(R)}\sqrt{\eta}
    }{1-\bbar}.
  \end{aligned}
\end{equation*}

Finally, the lower metric bound gives
\[
  \|x_k-x^*\|
  \leq
  \frac{1}{\sqrt{\mu}}\dM(x_k,x^*).
\]
Since $\mathbf{1}_{\{\tau_R>T\}}\leq\chi_k$ for $0\leq k\leq T$,
\[
  \E\!\left[
    \|x_k-x^*\|\mathbf{1}_{\{\tau_R>T\}}
  \right]
  \leq
  \frac{1}{\sqrt{\mu}}Y_k.
\]
Substituting the bound on $Y_k$ gives
\begin{equation*}
  \begin{aligned}
    &\E\!\left[
      \|x_k-x^*\|\mathbf{1}_{\{\tau_R>T\}}
    \right] \\
    &\quad\leq
    c\,\bbar^k\|x_0-x^*\|
    +\gamma_M e_M(\eta)
    +\gamma_w\sqrt{\tr(\Sigmaw)}
    +\gamma_\eta\sqrt{\eta}.
  \end{aligned}
\end{equation*}
with constants defined in~\eqref{eq:constants_revised} and
\eqref{eq:gamma_eta_revised}. This proves~\eqref{eq:main_bound_revised}.
\end{proof}

\begin{corollary}[High-Probability Conditional Mean Practical Stability]
\label{cor:conditional_mean}
Under the assumptions of Theorem~\ref{thm:stability}, let
\[
  \mathcal{A}_T:=\{\tau_R>T\}
\]
denote the finite-horizon localization event. Then
\[
  \Pr(\mathcal{A}_T)\geq1-\delta.
\]
Moreover, for every $0\leq k\leq T$,
\begin{equation}
  \label{eq:conditional_mean_bound}
  \begin{aligned}
    &\E\!\left[
      \|x_k-x^*\|
      \mid
      \mathcal{A}_T
    \right] \\
    &\quad\leq
    \frac{1}{1-\delta}
    \Bigl(
      c\,\bbar^k\|x_0-x^*\|
      +\gamma_M e_M(\eta) \\
    &\qquad\qquad
      +\gamma_w\sqrt{\tr(\Sigmaw)}
      +\gamma_\eta\sqrt{\eta}
    \Bigr).
  \end{aligned}
\end{equation}
Consequently, conditioned on the high-probability event
$\mathcal{A}_T$, the MPPI closed loop is exponentially stable in mean up
to the residual floor
\[
  \frac{
    \gamma_M e_M(\eta)
    +
    \gamma_w\sqrt{\tr(\Sigmaw)}
    +
    \gamma_\eta\sqrt{\eta}
  }{1-\delta}.
\]
\end{corollary}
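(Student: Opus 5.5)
The corollary follows from Theorem~\ref{thm:stability} and Lemma~\ref{lem:localization} by the elementary identity relating an expectation restricted to an event to the corresponding conditional expectation. The probability claim $\Pr(\mathcal{A}_T)\geq1-\delta$ is exactly the conclusion of Lemma~\ref{lem:localization} for the chosen radius $R=R(T,\delta,x_0)$. No new probabilistic estimate is needed.

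For the mean bound, fix $0\leq k\leq T$. Since $\Pr(\mathcal{A}_T)\geq1-\delta>0$, the conditional expectation is well defined. By definition,
\[
  \E\!\left[\|x_k-x^*\|\mid\mathcal{A}_T\right]
  =
  \frac{\E\!\left[\|x_k-x^*\|\mathbf{1}_{\mathcal{A}_T}\right]}{\Pr(\mathcal{A}_T)} .
\]
The numerator is precisely the left-hand side of~\eqref{eq:main_bound_revised}, so Theorem~\ref{thm:stability} bounds it by
\[
  c\,\bbar^k\|x_0-x^*\|+\gamma_M e_M(\eta)+\gamma_w\sqrt{\tr(\Sigmaw)}+\gamma_\eta\sqrt{\eta}.
\]
The denominator is at least $1-\delta$. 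Dividing gives~\eqref{eq:conditional_mean_bound}, since the right-hand side of~\eqref{eq:main_bound_revised} is nonnegative and replacing the denominator by a smaller positive number only enlarges the bound.

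The final ``consequently'' statement is a reading of~\eqref{eq:conditional_mean_bound}. Because $\bbar\leq(1+\beta)/2<1$ under the small-gain condition~\eqref{eq:smallgain}, the first term decays exponentially in $k$ with rate $\bbar$ and prefactor $c/(1-\delta)$. The remaining three terms are constant in $k$ and together form the stated residual floor.

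The only point requiring care is that the events and constants match. The theorem is stated with the same $R$, $\eta$, and $M\geq M_0(\eta;R)$, and the indicator there is exactly $\mathbf{1}_{\mathcal{A}_T}$, so nothing needs to be re-derived. I expect no real obstacle. The one subtlety worth stating is that the bound holds only over the finite horizon $0,\ldots,T$, because $\mathcal{A}_T$ depends on $T$, so the exponential-decay interpretation is likewise restricted to $k\leq T$.
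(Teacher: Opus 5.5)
Your proposal is correct and matches the paper's proof essentially verbatim. You take $\Pr(\mathcal{A}_T)\geq1-\delta$ from localization (the paper cites Theorem~\ref{thm:stability}, which carries it over from Lemma~\ref{lem:localization}), write $\E[\|x_k-x^*\|\mid\mathcal{A}_T]=\E[\|x_k-x^*\|\mathbf{1}_{\mathcal{A}_T}]/\Pr(\mathcal{A}_T)$, bound the numerator by \eqref{eq:main_bound_revised} and the denominator below by $1-\delta$; your remarks on nonnegativity of the bound and on the $0\leq k\leq T$ restriction are correct details the paper leaves implicit.
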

\begin{proof}
By Theorem~\ref{thm:stability},
\[
  \Pr(\mathcal{A}_T)\geq1-\delta.
\]
Also, for every $0\leq k\leq T$,
\begin{equation}
  \begin{aligned}
    &\E\!\left[
      \|x_k-x^*\|\mathbf{1}_{\mathcal{A}_T}
    \right] \\
    &\quad\leq
    c\,\bbar^k\|x_0-x^*\|
    +\gamma_M e_M(\eta)
    +\gamma_w\sqrt{\tr(\Sigmaw)} \\
    &\qquad
    +\gamma_\eta\sqrt{\eta}.
  \end{aligned}
\end{equation}

Therefore,
\begin{align*}
  \E\!\left[
    \|x_k-x^*\|
    \mid
    \mathcal{A}_T
  \right] &=
  \frac{
    \E\!\left[
      \|x_k-x^*\|\mathbf{1}_{\mathcal{A}_T}
    \right]
  }{
    \Pr(\mathcal{A}_T)
  } \\
  &\leq
  \frac{1}{1-\delta}
  \Bigl(
    c\,\bbar^k\|x_0-x^*\|
    +\gamma_M e_M(\eta) \\
  &\qquad\qquad
    +\gamma_w\sqrt{\tr(\Sigmaw)}
    +\gamma_\eta\sqrt{\eta}
  \Bigr).
\end{align*}
This proves~\eqref{eq:conditional_mean_bound}.
\end{proof}
 
\begin{remark}[Interpretation of the Stability Result]
The purpose of Theorem~\ref{thm:stability} is not to claim that MPPI is
globally stabilizing for arbitrary nonlinear stochastic systems. Rather,
the result formalizes an inheritance principle: if the deterministic
nominal MPC controller is stabilizing and contracting, then MPPI inherits
this stability whenever its finite-sample and finite-temperature
approximation error is small relative to the contraction margin. The
small-gain condition quantifies this requirement explicitly. The
remaining terms in the bound represent unavoidable practical floors due
to finite sampling, nonzero temperature, bad approximation events, and
Gaussian process noise.
\end{remark}

\subsection{ISS-Type Restatement}
\begin{proposition}[Finite-Horizon Localized ISS-Type Bound]
\label{prop:iss}
Under Theorem~\ref{thm:stability}, let
\[
  \mathcal{A}_T:=\{\tau_R>T\}
\]
be the finite-horizon localization event. Then
\[
  \Pr(\mathcal{A}_T)\geq1-\delta.
\]
Moreover, for every $0\leq k\leq T$,
\begin{equation}
  \label{eq:iss}
  \begin{aligned}
    &\E\!\left[
      \|x_k-x^*\|
      \mathbf{1}_{\mathcal{A}_T}
    \right] \\
    &\quad \leq
    \underbrace{
      c\,\bbar^k\,\|x_0-x^*\|
    }_{\text{class-$\mathcal{KL}$ decay term}}
    +
    \underbrace{
      \gamma_w\sqrt{\tr(\Sigmaw)}
      +
      \gamma_M e_M(\eta)
      +
      \gamma_\eta\sqrt{\eta}
    }_{\text{practical ISS-type gain}} .
  \end{aligned}
\end{equation}
Equivalently, defining
\[
  \beta_{\mathrm{ISS}}(r,k):=c\,\bbar^k r
\]
and
\[
  \Gamma_{\mathrm{ISS}}
  :=
  \gamma_w\sqrt{\tr(\Sigmaw)}
  +
  \gamma_M e_M(\eta)
  +
  \gamma_\eta\sqrt{\eta},
\]
the bound can be written compactly as
\begin{equation}
  \label{eq:iss_compact}
  \E\!\left[
    \|x_k-x^*\|
    \mathbf{1}_{\mathcal{A}_T}
  \right]
  \leq
  \beta_{\mathrm{ISS}}(\|x_0-x^*\|,k)
  +
  \Gamma_{\mathrm{ISS}}.
\end{equation}
Conditioned on the high-probability localization event $\mathcal{A}_T$,
\begin{equation}
  \label{eq:iss_conditional}
  \E\!\left[
    \|x_k-x^*\|
    \mid
    \mathcal{A}_T
  \right]
  \leq
  \frac{
    \beta_{\mathrm{ISS}}(\|x_0-x^*\|,k)
    +
    \Gamma_{\mathrm{ISS}}
  }{1-\delta}.
\end{equation}
\end{proposition}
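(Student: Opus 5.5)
The proposition is a repackaging of Theorem~\ref{thm:stability} and Corollary~\ref{cor:conditional_mean}, so the plan is to invoke those results and then verify that the pieces have the claimed ISS structure; no new estimate is needed. The probability statement $\Pr(\mathcal{A}_T)\geq1-\delta$ comes directly from Lemma~\ref{lem:localization}, because $R=R(T,\delta,x_0)$ is chosen exactly as there and $\mathcal{A}_T=\{\tau_R>T\}$.

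Next, I will take the bound \eqref{eq:main_bound_revised} of Theorem~\ref{thm:stability} for each $0\leq k\leq T$ and reorder the three residual terms, which yields \eqref{eq:iss}. To justify the labels, I will check two structural facts. First, $\beta_{\mathrm{ISS}}(r,k)=c\,\bbar^k r$ is class-$\mathcal{KL}$: it is continuous, it is zero at $r=0$, and it is strictly increasing in $r$ because $c=\sqrt{\bar\mu/\mu}>0$. It is also nonincreasing in $k$ and tends to zero as $k\to\infty$, since $\bbar\leq(1+\beta)/2<1$ by the small-gain condition \eqref{eq:smallgain}, as in Proposition~\ref{prop:contraction_robust}. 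Second, $\Gamma_{\mathrm{ISS}}$ is a nonnegative constant, independent of $x_0$ and $k$, built from the gains $\gamma_w,\gamma_M,\gamma_\eta$ of \eqref{eq:constants_revised}--\eqref{eq:gamma_eta_revised}. Each of these gains is finite because $1-\bbar>0$ and $C_{\mathrm{bad}}(R)<\infty$, the latter since $S_R<\infty$ on the compact $\Omega_R$. Each gain multiplies a disturbance-like magnitude: $\sqrt{\tr(\Sigmaw)}$ for the noise, $e_M(\eta)$ for the approximation error, and $\sqrt{\eta}$ for the bad-event probability. Each of these vanishes in the ideal limit, which is what makes this a practical ISS gain. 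Substituting the definitions then gives \eqref{eq:iss_compact}.

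For the conditional form \eqref{eq:iss_conditional}, I will repeat the argument of Corollary~\ref{cor:conditional_mean}. Write $\E[\|x_k-x^*\|\mid\mathcal{A}_T]=\E[\|x_k-x^*\|\mathbf{1}_{\mathcal{A}_T}]/\Pr(\mathcal{A}_T)$, which is well defined because $\Pr(\mathcal{A}_T)\geq1-\delta>0$. Bounding the numerator by \eqref{eq:iss_compact} and the denominator below by $1-\delta$ gives the result.

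The only point needing care is the dependence of $R$, and hence of $\gamma_\eta$ through $C_{\mathrm{bad}}(R)$, on $x_0$. As a consequence, $\Gamma_{\mathrm{ISS}}$ is not uniform in the initial condition, so the bound is a localized, finite-horizon ISS-type statement rather than a classical ISS estimate. I will state this explicitly instead of trying to remove it, since removing it would require an $x_0$-uniform choice of $R$ that Lemma~\ref{lem:localization} does not provide.
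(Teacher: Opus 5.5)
Your proposal is correct and follows the paper's proof: \eqref{eq:iss} is Theorem~\ref{thm:stability} with the residual terms reordered, \eqref{eq:iss_compact} is a renaming, and \eqref{eq:iss_conditional} follows by dividing by $\Pr(\mathcal{A}_T)\geq 1-\delta$; your extra class-$\mathcal{KL}$ and finiteness checks are harmless additions. One correction concerns your commentary, not the inequalities. Because $\beta_\infty$, $\beta_0$, and $C_{\mathcal{X},\mathcal{U}}$ are all computed on $\mathcal{X}=\Omega_R$ with $R=R(T,\delta,x_0)$, the dependence on $x_0$ enters $\bbar$, $\gamma_M$, $\gamma_w$, and $e_M(\eta)$, not only $\gamma_\eta$, so your earlier claim that $\Gamma_{\mathrm{ISS}}$ is independent of $x_0$ contradicts your final paragraph, and $\beta_{\mathrm{ISS}}$ itself depends on $x_0$ through $\bbar$.
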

\begin{proof}
The localized bound~\eqref{eq:iss} is exactly
Theorem~\ref{thm:stability}. The compact form~\eqref{eq:iss_compact}
follows by defining $\beta_{\mathrm{ISS}}$ and
$\Gamma_{\mathrm{ISS}}$ as above. Since
$\Pr(\mathcal{A}_T)\geq1-\delta$, the conditional bound follows from
\[
  \E[\|x_k-x^*\|\mid\mathcal{A}_T]
  =
  \frac{
    \E[\|x_k-x^*\|\mathbf{1}_{\mathcal{A}_T}]
  }{
    \Pr(\mathcal{A}_T)
  }
  \leq
  \frac{
    \E[\|x_k-x^*\|\mathbf{1}_{\mathcal{A}_T}]
  }{
    1-\delta
  }.
\]
\end{proof}

\subsection{Explicit Sample Threshold}
\begin{corollary}[Noise-Free Ideal MPPI Limit]
\label{cor:noisefree}
Suppose the assumptions of Theorem~\ref{thm:stability} hold and
$\Sigmaw=0$. Then, for every finite horizon $T$ and every
$0\leq k\leq T$,
\begin{equation}
  \label{eq:noisefree_bound}
  \E\!\left[
    \|x_k-x^*\|\mathbf{1}_{\{\tau_R>T\}}
  \right]
  \leq
  c\,\bbar^k\|x_0-x^*\|
  +
  \gamma_M e_M(\eta)
  +
  \gamma_\eta\sqrt{\eta}.
\end{equation}
In the ideal MPPI limit $\lambda\to0$, $M\to\infty$, and $\eta\to0$,
the approximation floor vanishes:
\[
  e_M(\eta)\to0,
  \qquad
  \sqrt{\eta}\to0.
\]
Consequently, the localized mean bound reduces to
\[
  \E\!\left[
    \|x_k-x^*\|\mathbf{1}_{\{\tau_R>T\}}
  \right]
  \leq
  c\,\bbar^k\|x_0-x^*\|,
  \qquad
  0\leq k\leq T.
\]
Thus, in the noise-free ideal-sampling limit, MPPI recovers exponential
convergence inherited from the stabilizing nominal MPC policy.
\end{corollary}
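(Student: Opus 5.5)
The plan is to specialize Theorem~\ref{thm:stability} to $\Sigmaw=0$ and then pass to the limit in the residual terms one at a time. \textbf{Step 1.} Setting $\Sigmaw=0$ in \eqref{eq:main_bound_revised} removes the term $\gamma_w\sqrt{\tr(\Sigmaw)}$ and gives \eqref{eq:noisefree_bound} directly. The constants $c,\gamma_M,\gamma_\eta$ remain finite, with $C_{\mathrm{bad}}(R)=3\bar\mu(L_x^2S_R^2+L_u^2\bar u^2)$. In the noise-free case the localization event can even be taken deterministic given the MPPI samples, but this is not needed. I would also note that $R=R(T,\delta,x_0)$ and the compact set $\Omega_R$ are fixed \emph{before} the limits are taken, so $S_R$, $C_{\mathrm{bad}}(R)$, and the compact-set constants $Z_0$, $C_{\mathcal{X},\mathcal{U}}$ do not move.

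\textbf{Step 2: order the limits.} First send $\lambda\to0$. By Proposition~\ref{prop:bias}(ii), $\beta_\infty(\lambda)\to0$ and $\beta_0(\lambda)\to0$ uniformly on $\Omega_R\times\mathcal{U}_N$. Consequently the small-gain condition \eqref{eq:smallgain} holds for all small $\lambda$, and $\bbar=\beta+\Phi(\beta_\infty)$ stays at most $(1+\beta)/2$. This keeps the denominators $1-\bbar$ in $\gamma_M,\gamma_\eta$ bounded away from zero uniformly.

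Next, for fixed $\lambda$ and $\eta$, let $M\to\infty$ with $M\geq M_0(\eta;R)$. Then $\varepsilon_M(\eta)=C_{\mathcal{X},\mathcal{U}}\sqrt{\log(4m/\eta)/M}\to0$ by Lemma~\ref{lem:concentration}. Finally let $\eta\to0$, choosing $M=M(\eta)$ growing fast enough that $\log(1/\eta)/M(\eta)\to0$. Hence $e_M(\eta)=\beta_0+\varepsilon_M(\eta)\to0$ and $\gamma_\eta\sqrt\eta\to0$.

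\textbf{Step 3.} The left side of \eqref{eq:noisefree_bound} depends on $(\lambda,M,\eta)$ through the closed loop. The inequality, however, holds for each admissible triple, and its right side tends to $c\,\bbar^k\|x_0-x^*\|$. Taking $\limsup$ therefore yields the stated limiting bound, interpreted as a bound in the limit. For the sharper form with $\bbar\to\beta$, one could additionally use $\Phi(\beta_\infty)\to0$.

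\textbf{Main obstacle.} The delicate point is uniformity. The constant $C_{\mathcal{X},\mathcal{U}}$ depends on $Z_0$, which depends on $\lambda$: $Z_0\to0$ as $\lambda\to0$, since the weights $e^{-J/\lambda}$ collapse. Therefore $M_0(\eta)$ and $\varepsilon_M$ blow up as $\lambda\to0$ for fixed $M$. This is why the limits must be iterated with $M\to\infty$ taken after fixing $\lambda$, for example along a diagonal sequence $\lambda_j\downarrow0$, $\eta_j\downarrow0$, with $M_j\geq M_0(\eta_j;R,\lambda_j)$ chosen so that $\varepsilon_{M_j}(\eta_j)\le 1/j$. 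I would state the corollary's limit precisely along such a sequence.
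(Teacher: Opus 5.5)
Your proposal is correct and follows the argument the paper leaves implicit (the corollary is stated without a separate proof): specialize Theorem~\ref{thm:stability} to $\Sigmaw=0$, then let $\beta_0(\lambda)\to0$ by Proposition~\ref{prop:bias}(ii), $\varepsilon_M(\eta)\to0$ by Lemma~\ref{lem:concentration}, and $\gamma_\eta\sqrt{\eta}\to0$ with $C_{\mathrm{bad}}(R)$ fixed and $1-\bbar\geq(1-\beta)/2$. Your diagonal-sequence treatment of $Z_0(\lambda)\to0$ genuinely sharpens the paper's unqualified joint limit, but drop the Step~1 claim that $Z_0$ and $C_{\mathcal{X},\mathcal{U}}$ ``do not move,'' which your own last paragraph contradicts, and justify holding $R$ fixed across parameters: with $\Sigmaw=0$ and $\|u_k\|\leq\bar u$, the set reachable over $0,\ldots,T$ is bounded uniformly in $(\lambda,M)$, so a single $R$ gives $\Pr(\tau_R>T)=1$ for every such controller.
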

\begin{corollary}[Finite-Horizon Sample Threshold and P3 Interface]
\label{cor:Mstar}
Fix a finite horizon $T\in\mathbb{N}$, localization confidence
$\delta\in(0,1)$, per-step approximation confidence $\eta\in(0,1)$, and
desired finite-sample accuracy $\varepsilon>0$. The finite-horizon
stability certificate can be obtained as follows.

\textbf{Step 1: Choose the localization set.}
Use Lemma~\ref{lem:localization} to choose a sublevel radius
$R=R(T,\delta,x_0)$ such that
\[
  \Pr(\tau_R>T)\geq1-\delta.
\]
All compact-set constants in the MPPI approximation bound are then
computed on
\[
  \Omega_R=\{x\in\R^n:\Jstar(x)\leq R\}.
\]

\textbf{Step 2: Choose the temperature.}
Choose $\lambda$ small enough so that the small-gain condition
\[
  \Phi(\beta_\infty)
  \leq
  \frac{1-\beta}{2}
\]
holds. This ensures
\[
  \bbar=\beta+\Phi(\beta_\infty)<1.
\]
The temperature also controls the infinite-sample bias floor $\beta_0$ in
\[
  e_M(\eta)=\beta_0+\varepsilon_M(\eta).
\]

\textbf{Step 3: Choose the sample size.}
Let $C_{\Omega_R,\mathcal{U}}$ denote the concentration constant from
Lemma~\ref{lem:concentration}, computed on
$\Omega_R\times\mathcal{U}_N$. It suffices to choose
\begin{equation}
  \label{eq:Mstar}
  M^*
  =
  \left\lceil
    \frac{
      C_{\Omega_R,\mathcal{U}}^2
      \log(4m/\eta)
    }{
      \varepsilon^2
    }
  \right\rceil .
\end{equation}
Then, for all $M\geq M^*$,
\[
  \varepsilon_M(\eta)\leq\varepsilon
\]
with conditional probability at least $1-\eta$ at each time step on the
localized compact set.

\textbf{Resulting bound.}
For every $0\leq k\leq T$, Theorem~\ref{thm:stability} gives
\begin{align}
  &\E\!\left[
    \|x_k-x^*\|\mathbf{1}_{\{\tau_R>T\}}
  \right] \notag\\
  &\quad\leq
  c\,\bbar^k\|x_0-x^*\|
  +
  \gamma_M\bigl(\beta_0+\varepsilon\bigr)
  +
  \gamma_w\sqrt{\tr(\Sigmaw)}
  +
  \gamma_\eta\sqrt{\eta}.
  \label{eq:Mstar_bound}
\end{align}

\textbf{Design implications.}
\begin{enumerate}[label=(\arabic*),leftmargin=*]
  \item \emph{Temperature controls the bias and the small-gain condition.}
  The temperature $\lambda$ must be small enough so that
  $\Phi(\beta_\infty)$ does not consume the nominal contraction margin.
  It also reduces the infinite-sample bias floor $\beta_0$.

  \item \emph{Sample size controls only the Monte Carlo error.}
  Increasing $M$ reduces $\varepsilon_M(\eta)$ but does not remove
  $\beta_0$ or the process-noise floor.

  \item \emph{$M^*$ is independent of $\Sigmaw$.}
  The process noise affects the residual term
  $\gamma_w\sqrt{\tr(\Sigmaw)}$ and the finite-horizon localization
  radius $R$, but it does not directly enter the MPPI concentration
  threshold~\eqref{eq:Mstar}.

  \item \emph{Sampling-covariance design reduces $M^*$.}
  Methods such as CoVO-MPC~\cite{yi2024covompc} can reduce the
  concentration constant $C_{\Omega_R,\mathcal{U}}$ by improving the
  sampling covariance $\Sigma_\epsilon$, thereby reducing the required
  sample size $M^*$.

  \item \emph{Noise estimation affects the stochastic floor.}
  The term $\gamma_w\sqrt{\tr(\Sigmaw)}$ is irreducible by increasing
  $M$. It must instead be addressed through disturbance reduction,
  robustification, or online noise-covariance estimation.
\end{enumerate}

\emph{Interface to P3.}
If P3 provides an online estimate $\hat\Sigma_w^{(k)}$ of the process
noise covariance, then the certified stochastic floor can be updated as
\[
  \gamma_w\sqrt{\tr(\hat\Sigma_w^{(k)})}.
\]
As the estimate improves, the reported noise floor tightens without
changing the sample threshold $M^*$.
\end{corollary}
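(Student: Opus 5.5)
The corollary is a direct specialization of Theorem~\ref{thm:stability}, so the plan is to assemble the three design steps from earlier results and then substitute into the theorem.

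\textbf{Step 1 (localization).} Invoke Lemma~\ref{lem:localization} with the given $T$ and $\delta$. It yields a radius $R$ with $\Pr(\tau_R>T)\geq1-\delta$. The sublevel set $\Omega_R$ is closed by continuity of $\Jstar$ and bounded because $\Jstar\geq\Vf\geq\lambda_{\min}(P)\|\cdot\|^2$ through the terminal-cost lower bound along optimal rollouts; the CLF lower bound gives coercivity, or at least the stage-cost bound $\alpha_\ell$ does. Hence $\Omega_R$ is compact, so Lemmas~\ref{lem:finitemoments}, \ref{lem:concentration} and Proposition~\ref{prop:bias} apply with $\mathcal{X}=\Omega_R$. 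This fixes $Z_0$, $C_{\Omega_R,\mathcal{U}}$, $K_\lambda$, $K^*$ and $S_R$.

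\textbf{Step 2 (temperature).} Use part~(ii) of Proposition~\ref{prop:bias}: $\beta_\infty(\lambda)=K_\lambda K^*\to0$ as $\lambda\to0$ uniformly on $\Omega_R\times\mathcal{U}_N$. So there is $\lambda$ with $\sqrt{\bar\mu/\mu}\,L_u\beta_\infty\leq(1-\beta)/2$, which gives $\bbar\leq(1+\beta)/2<1$. One subtlety is that $C_{\Omega_R,\mathcal{U}}$ depends on $Z_0$, and $Z_0$ depends on $\lambda$. The order therefore matters: $\lambda$ is fixed before the sample size is chosen, and I will state this explicitly.

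\textbf{Step 3 (sample size).} With $\lambda$ fixed, solve $C_{\Omega_R,\mathcal{U}}\sqrt{\log(4m/\eta)/M}\leq\varepsilon$ for $M$, which gives exactly $M\geq M^*$ as in \eqref{eq:Mstar}. Lemma~\ref{lem:concentration} also requires $M\geq M_0(\eta)$. I would handle this by taking $M\geq\max\{M^*,M_0(\eta)\}$, or by noting it is implied when $\varepsilon$ is small. This is the main gap I expect, and I would flag it as part of ``it suffices.'' Then $e_M(\eta)\leq\beta_0+\varepsilon$ on the good event.

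\textbf{Conclusion.} Since every hypothesis of Theorem~\ref{thm:stability} now holds, substitute $e_M(\eta)\leq\beta_0+\varepsilon$ into \eqref{eq:main_bound_revised}, using $\gamma_M\geq0$, to obtain \eqref{eq:Mstar_bound}.

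The design implications follow by inspection of the formulas:
\begin{itemize}
\item $M^*$ involves only $C_{\Omega_R,\mathcal{U}}$, $m$, $\eta$ and $\varepsilon$. Its constants depend on $Z_0$, $\Sigma_\epsilon$ and $m$, not on $\Sigmaw$. The process noise enters only indirectly, through $R$ and hence $\Omega_R$.
\item $\beta_0$ does not depend on $M$.
\item $\gamma_w$ multiplies $\sqrt{\tr(\Sigmaw)}$, so replacing $\Sigmaw$ by an estimate changes only that term.
\end{itemize}
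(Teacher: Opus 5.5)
Your route is the paper's. The corollary has no separate proof beyond its own steps: Lemma~\ref{lem:localization}, then $\lambda$ via Proposition~\ref{prop:bias}(ii), then inverting $C_{\Omega_R,\mathcal{U}}\sqrt{\log(4m/\eta)/M}\leq\varepsilon$, then substituting $e_M(\eta)\leq\beta_0+\varepsilon$ into Theorem~\ref{thm:stability} with $\gamma_M\geq0$. Your requirement $M\geq\max\{M^*,M_0(\eta)\}$ is correct and repairs an omission in the statement. Your observation that $C_{\Omega_R,\mathcal{U}}$ depends on $\lambda$ through $Z_0$, so $\lambda$ must be fixed before $M$, is also right.

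Three supporting claims need repair.

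\emph{Compactness of $\Omega_R$.} The inequality $\Jstar\geq\Vf$ is false. The terminal cost is evaluated at the predicted state $x_{N|k}$, not at $x$. Summing the CLF decrease along a CLF-control candidate sequence gives the reverse, $\Jstar\leq\Vf$. What you do have is $\Jstar(x)\geq\ell(x,\pistar(x))\geq\alpha_\ell(\|x-x^*\|)$. This bounds every $\Omega_R$ only if $\alpha_\ell$ is unbounded (e.g.\ a quadratic stage cost). The paper simply takes $S_R<\infty$ for granted, so either assume this explicitly or localize to a Euclidean ball instead.

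\emph{Order of choices.} You order $\lambda$ before $M$, but $R$ is chosen before both. The proof of Lemma~\ref{lem:localization} extracts $R$ from the law of the MPPI closed loop, which depends on $\lambda$ and $M$; the paper's own experiments report $R$ changing with $M$. To make Step~1 genuinely precede Steps~2--3, use the controller-independent recursion $\|x_{k+1}-x^*\|\leq L_x\|x_k-x^*\|+L_u\bar u+\|w_k\|$ from that proof. Unrolled, it dominates $\max_{k\leq T}\|x_k-x^*\|$ by a random variable $\rho_T$ depending only on $w_0,\ldots,w_{T-1}$, $L_x$, $L_u$, $\bar u$ and $x_0$. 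Pick $\rho$ with $\Pr(\rho_T\leq\rho)\geq1-\delta$ and set $R:=\max_{\|x-x^*\|\leq\rho}\Jstar(x)$. This gives a radius valid simultaneously for every $\lambda$ and $M$.

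\emph{The P3 interface.} Replacing $\Sigmaw$ by an estimate does not change only the $\gamma_w$ term. The quantity $\tr(\Sigmaw)$ also appears in $C_{\mathrm{bad}}(R)$, hence in $\gamma_\eta$. Moreover, $R$ depends on $\Sigmaw$, and with it $C_{\Omega_R,\mathcal{U}}$ and $M^*$. So the P3 update leaves $M^*$ unchanged only with $R$ held fixed. The updated floor is implied by the theorem only if $\tr(\hat\Sigma_w^{(k)})\geq\tr(\Sigmaw)$.
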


\section{Discussion}
\label{sec:discussion}

\subsection{Comparison with Mayne et al.\ and the companion paper~\cite{yoon2026p1}}

Table~\ref{tab:mayne} maps each ingredient of the classical MPC
stability framework of Mayne et al.~\cite{mayne2000constrained} to its
counterpart in this paper. The comparison should be interpreted carefully:
Mayne et al.\ consider deterministic constrained MPC, whereas this work analyzes
a stochastic sampling-based implementation of a stabilizing deterministic
nonlinear MPC policy.

\begin{table}[t]
\centering
\caption{Mayne et al.~\cite{mayne2000constrained} conditions and analogs in this work.}
\label{tab:mayne}
\begin{tabular}{@{}p{2.7cm}p{2.2cm}p{3.0cm}@{}}
\toprule
Condition & Role & This Work Analog \\
\midrule
$\Vf$ is a CLF
&
Terminal decrease
&
Asm.~\ref{asm:clf};
Lem.~\ref{lem:telescoping}
\\[2pt]

Terminal set $\mathcal{X}_f$
&
Recursive feasibility
&
Replaced by global CLF feedback;
no terminal constraint set
\\[2pt]

$\ell\geq\alpha_\ell(\|x-x^*\|)$
&
Positive definiteness
&
Problem setup;
state regulation to $x^*$
\\[2pt]

$J^*$ decreases under nominal MPC
&
Nominal stability backbone
&
Lem.~\ref{lem:telescoping}
\\[2pt]

Robustness to implementation error
&
Closed-loop robustness
&
Lem.~\ref{lem:approx};
Prop.~\ref{prop:contraction_robust}
\\[2pt]

Compactness of analysis region
&
Uniform constants
&
Finite-horizon localization;
Lem.~\ref{lem:localization}
\\[2pt]

ISS/practical stability
&
Input/noise robustness
&
Thm.~\ref{thm:stability};
Prop.~\ref{prop:iss}
\\
\bottomrule
\end{tabular}
\end{table}
 
The logical structure of this work is therefore a stability-inheritance chain:
nominal MPC decrease
$\Longrightarrow$
nominal contraction
$\Longrightarrow$
MPPI approximation
$\Longrightarrow$
small-gain robustness
$\Longrightarrow$
localized mean practical stability.

The finite-horizon localization step is required because the stochastic
system is driven by Gaussian process noise. Since Gaussian noise has
unbounded support, no bounded Lyapunov sublevel set can be invariant
almost surely over an infinite horizon. Thus this work uses
Lemma~\ref{lem:localization} only to guarantee that, for any finite
horizon $T$, the trajectory remains in a compact set with probability at
least $1-\delta$. On this high-probability localized event, the compact
constants in the MPPI concentration and contraction robustness estimates
are valid.

Compared with the companion paper~\cite{yoon2026p1}, the role of the LTI/quadratic DARE Lyapunov function is
replaced by a nonlinear contraction metric and a CLF-compatible nominal
MPC value function. In the companion paper~\cite{yoon2026p1}, the LTI/quadratic structure permits a
closed-form Gibbs characterization by completing the square. In this work, this
closed-form calculation is replaced by a compact-set Laplace argument for
the infinite-sample temperature bias and a concentration argument for the
finite-sample MPPI error. Consequently, the result in this work is necessarily more
localized than the LTI result: it certifies finite-horizon
high-probability localized mean practical stability rather than global
almost-sure boundedness.

The main conceptual inheritance principle, however, is the same as in the companion paper~\cite{yoon2026p1}.
MPPI is not assumed to be stabilizing independently. Instead, it is shown
to inherit stability from a stabilizing deterministic MPC policy
$\pistar$, provided that the MPPI approximation error is small enough to
be absorbed by the nominal contraction margin.

\section{Numerical Experiments}
\label{sec:simulations}

The experiments serve two purposes. First, Fig.~\ref{fig:motivation}
motivates the need for MPPI by showing a task where a stabilizing
baseline controller fails: a coupled pendulum driven toward a
state-space obstacle by LQR, which MPPI avoids while preserving
stability. Second, Figs.~\ref{fig:baseline}--\ref{fig:three_floor}
validate the stability-inheritance mechanism on the single pendulum,
verifying the three objects the theory rests on: the one-step $L^1$
geodesic drift of Proposition~\ref{prop:contraction_robust}, the
finite-horizon localization of Lemma~\ref{lem:localization}, and the
three-floor mean bound of Theorem~\ref{thm:stability}. All figures are reproducible via the accompanying code repository~\cite{yoon2026p2code}.

\subsection{Motivation: When a Stabilizing Controller Is Not Enough}

MPPI is most valuable when the stage cost includes terms that a
baseline stabilizing controller cannot handle --- obstacle costs
being the canonical example.
Fig.~\ref{fig:motivation} illustrates this on a discrete-time coupled
pendulum ($n=4$, $m=2$, downward equilibrium) with a Gaussian repulsive
obstacle placed directly on the LQR path in $(\theta_1,\theta_2)$
space. The LQR controller, which is optimal for the quadratic cost,
drives the system through the obstacle on every tested initial
condition. The LQR$\,+\,$MPPI controller --- using LQR as the nominal
warm-start and MPPI rollouts that include the obstacle cost --- steers
around the obstacle while converging to equilibrium. Both controllers
are initialized from three asymmetric initial conditions; the
LQR$\,+\,$MPPI controller uses $M=1000$ samples per step.
This figure is motivational and does not claim a stability certificate
for the obstacle-avoidance task; the certificate is established for
the single-pendulum experiments below.

\begin{figure}[t]
\centering
\includegraphics[width=\columnwidth]{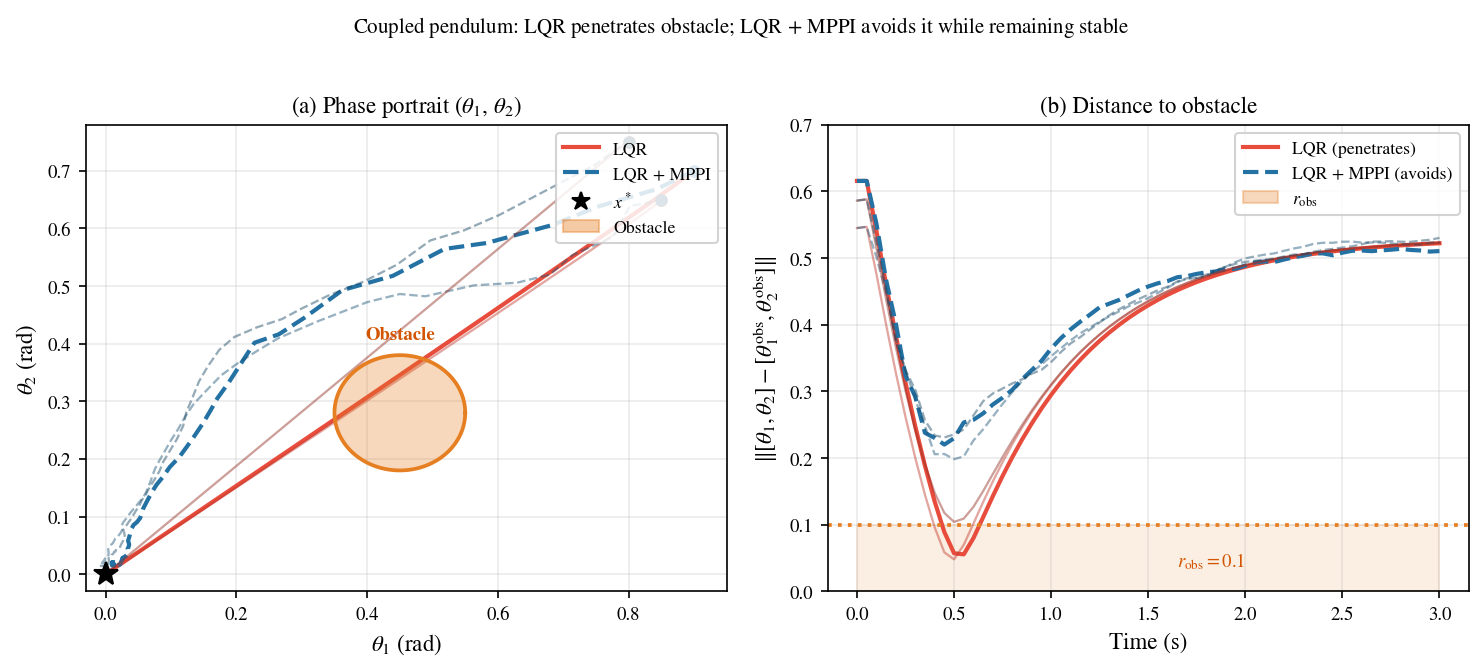}
\caption{Motivation: coupled pendulum with a state-space obstacle.
(a) Phase portrait in $(\theta_1,\theta_2)$: LQR trajectories
(red, solid) pass through the obstacle; LQR$\,+$\,MPPI trajectories
(blue, dashed) detour around it and converge to $x^*$.
(b) Distance to obstacle over time: LQR penetrates the obstacle
region (shaded), MPPI maintains clearance throughout.}
\label{fig:motivation}
\end{figure}

\subsection{System and Calibrated Constants}

We use a discrete-time nonlinear pendulum with state
$x_k=[\theta_k,\dot\theta_k]^\top$ and input torque $u_k$:
\begin{equation}
  \label{eq:pendulum_dynamics}
  \begin{aligned}
    \theta_{k+1}
    &=
    \theta_k+\Delta t\,\dot\theta_k,\\
    \dot\theta_{k+1}
    &=
    \dot\theta_k
    +
    \Delta t\!\left(
      -g\sin\theta_k
      -
      b\,\dot\theta_k
      +
      \frac{u_k}{ml^2}
    \right)
    +
    w_k,
  \end{aligned}
\end{equation}
where $w_k\sim\mathcal{N}(0,\Sigmaw)$ is additive process noise,
$l=1$ throughout (so $ml^2=1$ numerically, but we retain the
physically general form), and the target equilibrium is the
downward rest state $x^*=(0,0)$.

The parameters are gravity $g=9.81$, unit rod length, mass $m=1$, damping
$b=4$, and step $\Delta t=0.1$. The stage cost is
$\ell(x,u)=x^\top Qx+u^\top Ru$ with $Q=\mathrm{diag}(50,20)$ and
$R=10^{-3}$, the horizon is $N=4$, and the CLF terminal cost is
$\Vf(x)=x^\top Px$ with $P$ the DARE solution. The full nonlinearity
$\sin\theta$ is retained in the dynamics and the MPPI rollouts. MPPI uses
temperature $\lambda=1$, sampling covariance $\Sigma_\epsilon=400$,
warm-started by the receding-horizon shift, with actuator saturation
$\bar u=15$.

Table~\ref{tab:constants} lists the constants computed directly from the
system~\eqref{eq:pendulum_dynamics}. The constant metric $\Mmet(x)=P$
certifies a nominal contraction rate $\beta=0.936$ (the
linearization-certified rate; the empirical rate over the operating region
is $0.854$). The estimated bias gain is $\beta_\infty\approx0$, so the
bias is a constant temperature floor $\beta_0\approx1.15$, consistent with
Proposition~\ref{prop:bias}: once the sampling distribution surrounds the
optimizer, the bias does not scale with $\|x-x^*\|$. The small-gain
condition~\eqref{eq:smallgain} holds with margin
$\Phi(\beta_\infty)=0\le(1-\beta)/2=0.032$, giving $\bbar=0.936<1$.

\begin{table}[t]
\centering
\caption{Computed constants for the pendulum~\eqref{eq:pendulum_dynamics}.}
\label{tab:constants}
\begin{tabular}{@{}llll@{}}
\toprule
Quantity & Value & Quantity & Value \\
\midrule
$\mu=\lambda_{\min}(P)$ & $20.3$ & $\rho(A_\mathrm{cl})$ & $0.854$ \\
$\bar\mu=\lambda_{\max}(P)$ & $395.5$ & $\beta$ (nominal) & $0.936$ \\
$\bar\mu/\mu$ (cond.) & $19.5$ & $\beta_\infty$ & $\approx 0$ \\
$L_x$ & $1.45$ & $\beta_0$ (temp.\ bias) & $1.15$ \\
$L_u$ & $0.10$ & $\Phi(\beta_\infty)$ & $0.00$ \\
$(1-\beta)/2$ & $0.032$ & $\bbar$ & $0.936$ \\
\bottomrule
\end{tabular}
\end{table}

\subsection{E1: Nominal MPC Baseline}

We first verify the analytical baseline. Running the deterministic nominal
MPC policy $\pistar$ without noise or sampling, the value function
$\Jstar(x_k)$ decreases geometrically and the empirical contraction ratio
$\widehat\beta_k=\dM(x_{k+1},x^*)/\dM(x_k,x^*)$ settles at $0.854$, below
the certified rate $\beta=0.936$ (Fig.~\ref{fig:baseline}). This confirms
that $\pistar$ plays the same analytical role here that LQR plays in the companion paper~\cite{yoon2026p1}:
a stabilizing reference whose contraction margin MPPI must approximate.

\begin{figure}[t]
\centering
\includegraphics[width=\columnwidth]{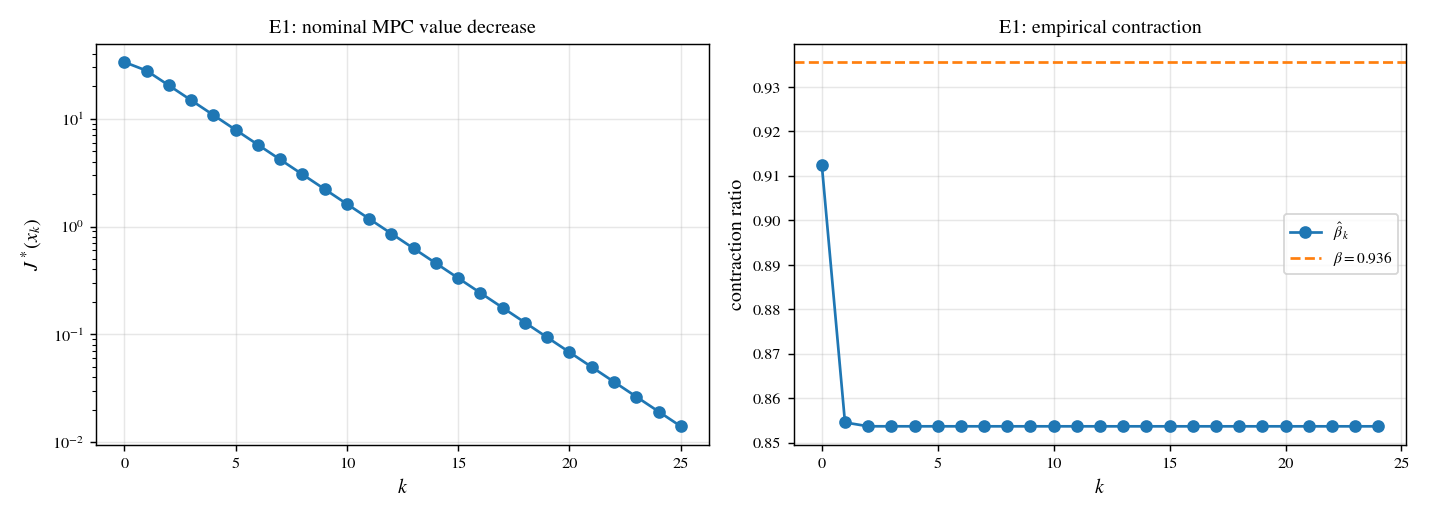}
\caption{E1: nominal MPC value-function decrease (left) and empirical
contraction ratio (right). The ratio settles at $0.854$, below the
certified $\beta=0.936$, confirming the analytical baseline.}
\label{fig:baseline}
\end{figure}

\subsection{E4: Geodesic Drift and Decay Rate}

We test the one-step $L^1$ geodesic drift of
Proposition~\ref{prop:contraction_robust} directly. For $150$ states
sampled in the operating region, we estimate
$\E[\dM(x_{k+1},x^*)\mid x_k]$ by Monte Carlo ($30$ noise/sampler
realizations each) and compare to the theoretical right-hand side
$\bbar\,\dM(x_k,x^*)+\sqrt{\bar\mu}L_u e_M(\eta)+\sqrt{\bar\mu}\,\E\|w_k\|$.
At $M\in\{200,800\}$ the inequality holds at every tested state
($0/150$ violations, minimum slack $2.44$), confirming the proof's
core recursion.
Fig.~\ref{fig:decay} shows the empirical closed-loop decay rate
$\widehat\beta(M)$ decreasing toward the analytical bound $\bbar=0.936$ as
$M$ grows, as predicted by Corollary~\ref{cor:Mstar}.

\begin{figure}[t]
\centering
\includegraphics[width=0.85\columnwidth]{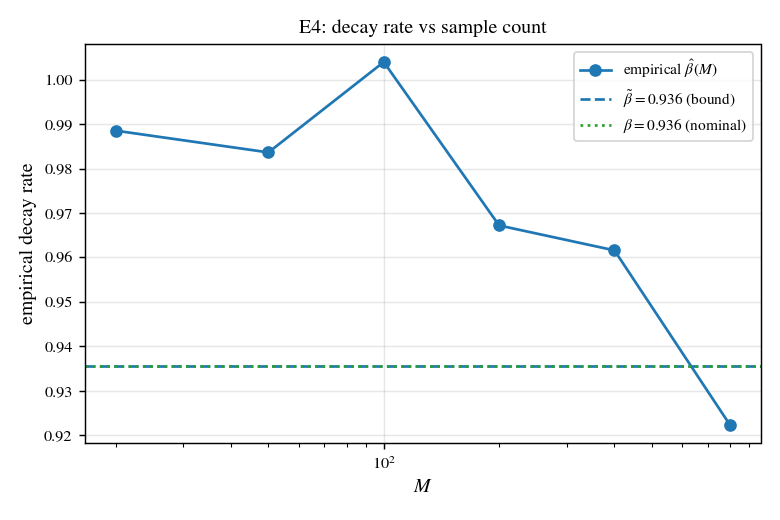}
\caption{E4: empirical closed-loop decay rate vs.\ sample count $M$,
approaching the analytical bound $\bbar=0.936$.
To isolate the decay-rate signal from Monte Carlo variance, this
experiment uses a reduced sampling covariance $\Sigma_\epsilon=50$
(vs.\ $400$ elsewhere); all other parameters are unchanged.}
\label{fig:decay}
\end{figure}

\subsection{E5: Finite-Horizon Localization and the Three-Floor Bound}

We verify Lemma~\ref{lem:localization} and Theorem~\ref{thm:stability}
jointly with horizon $T=40$, confidence $\delta=0.1$, $\eta=0.05$, and
$\sigma_w=0.05$, using $60$ trajectories per sample count initialized
from $x_0\sim\mathrm{Uniform}([-1,1]^2)$. Choosing the localization
radius $R$ as the $(1-\delta)$ empirical quantile of
$\max_{k\le T}\Jstar(x_k)$ yields $\Pr(\tau_R>T)=0.90$ in all cases,
exactly meeting the $1-\delta$ target. The localization radius remains of the same order as $M$ grows
($R=317\to321\to335$ for $M\in\{50,200,800\}$ at $\delta=0.1$),
showing that the finite-horizon localization procedure is stable across
sample counts. The three-floor
bound~\eqref{eq:main_bound_revised} holds for all $0\le k\le T$ at
every $M$ (Fig.~\ref{fig:three_floor}).

\begin{figure}[t]
\centering
\includegraphics[width=\columnwidth]{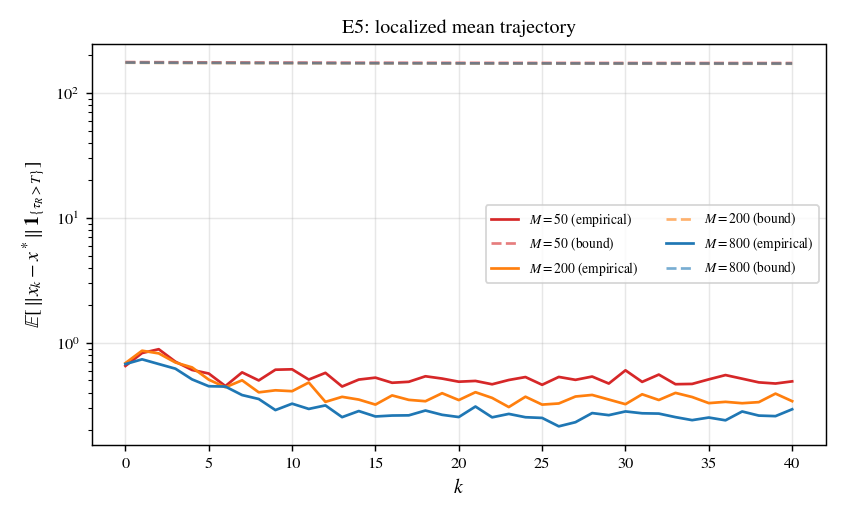}
\caption{E5: localized mean
$\E[\|x_k-x^*\|\mathbf{1}_{\{\tau_R>T\}}]$ (solid) and
three-floor theoretical bound (dashed) for $M\in\{50,200,800\}$.
The bound holds for all $0\le k\le T$ at every $M$.
The empirical mean decays from the initial condition and floors
at the residual noise level; the certified bound sits above it
throughout.}
\label{fig:three_floor}
\end{figure}

\subsection{Discussion of Conservatism}

The certified bound is valid but loose: the empirical localized mean
converges to roughly $0.2$--$0.4$, while the certified floor is of
order $10^2$. This gap is the expected price of a worst-case
Lyapunov argument. The conditions in Theorem~\ref{thm:stability} are
sufficient, not necessary: $C_{\mathrm{bad}}(R)$ uses the worst-case
saturated control $\bar u$ on the bad event rather than the typical
applied control magnitude on $\Omega_R$, and the Lipschitz propagation
in the geodesic bound is a global worst-case estimate. Sufficient-condition
certificates of this structure are standard in nonlinear MPC stability
analysis~\cite{mayne2000constrained}, and the gap between the certified
floor and empirical behavior does not indicate a deficiency in the
theory. The empirical decay rate in Fig.~\ref{fig:decay} confirms that
the theory captures the correct qualitative behavior --- the rate
$\widehat\beta(M)$ tracks $\bbar$ tightly --- even where the absolute
floor in Fig.~\ref{fig:three_floor} is conservative.

\subsection{Scope of the Small-Gain Condition}

The small-gain condition~\eqref{eq:smallgain} is genuinely restrictive:
it requires a contraction margin $1-\beta$ large enough to dominate
$\Phi(\beta_\infty)$. Configurations with weak damping or an
ill-conditioned metric ($\bar\mu/\mu\gg1$) gave $\bbar>1$ during
calibration, in which case the theorem certifies nothing. The reported
configuration was selected to have a genuinely large contraction margin,
illustrating the regime in which the certificate applies. Extending the
certificate to systems where the small-gain condition is tighter ---
for example by using state-dependent metric bounds rather than global
Lipschitz constants --- is a natural direction for future work.

\section{Conclusion}
\label{sec:conclusion}

We have established a closed-loop stability certificate for MPPI on
nonlinear systems by proving an inheritance principle: if the
deterministic nominal MPC policy is stabilizing and contracting, then
MPPI inherits this stability whenever its finite-temperature and
finite-sample approximation error is small enough relative to the nominal
contraction margin. The key condition is the explicit small-gain
inequality
\[
  \Phi(\beta_\infty)
  =
  \sqrt{\frac{\bar\mu}{\mu}}L_u\beta_\infty
  \leq
  \frac{1-\beta}{2},
\]
which ensures that the state-dependent MPPI approximation error does not
destroy the contraction of the nominal MPC closed loop.

The resulting guarantee is a finite-horizon high-probability localized
mean practical stability bound. For any prescribed finite horizon $T$ and
confidence level $\delta$, there exists a compact sublevel set
$\Omega_R$ such that the trajectory remains in $\Omega_R$ over
$0,\ldots,T$ with probability at least $1-\delta$. On this localized
event, Theorem~\ref{thm:stability} gives the explicit three-floor bound
\begin{equation*}
  \begin{aligned}
    &\E\!\left[
      \|x_k-x^*\|\mathbf{1}_{\{\tau_R>T\}}
    \right] \\
    &\quad\leq
    c\,\bbar^k\|x_0-x^*\| \\
    &\qquad
    +\gamma_M e_M(\eta)
    +\gamma_w\sqrt{\tr(\Sigmaw)} \\
    &\qquad
    +\gamma_\eta\sqrt{\eta}.
  \end{aligned}
\end{equation*}
The three residual terms have distinct meanings: $\gamma_M e_M(\eta)$ is
the MPPI approximation floor, $\gamma_w\sqrt{\tr(\Sigmaw)}$ is the
Gaussian process-noise floor, and $\gamma_\eta\sqrt{\eta}$ is the
bad-event confidence floor. In the noise-free ideal-sampling limit
$\Sigmaw=0$, $\lambda\to0$, $M\to\infty$, and $\eta\to0$, these floors
vanish and the result reduces to exponential convergence inherited from
the nominal MPC policy.

\emph{Three actionable rules.}
(R1)~Fix the finite horizon $T$ and confidence level $\delta$, and choose
a localization radius $R=R(T,\delta,x_0)$ using
Lemma~\ref{lem:localization}.
(R2)~Choose $\lambda$ small enough so that the small-gain condition holds
and the infinite-sample bias floor $\beta_0$ is sufficiently small.
(R3)~Choose $M\geq M^*$ from~\eqref{eq:Mstar} so that the finite-sample
Monte Carlo error satisfies $\varepsilon_M(\eta)\leq\varepsilon$ on the
localized compact set.

\emph{Series context.}
The companion paper~\cite{yoon2026p1} treats the LTI/quadratic case, where the DARE
Lyapunov function and completing-the-square Gibbs calculation yield
closed-form constants. The present paper extends the same stability
inheritance principle to nonlinear systems by replacing the DARE
Lyapunov argument with contraction theory and replacing the quadratic
Gibbs calculation with compact-set Laplace and concentration arguments.
Because the stochastic dynamics include Gaussian process noise, the 
certificate in this work is necessarily localized over finite horizons rather than an
almost-sure boundedness result over an infinite horizon. The third companion paper~\cite{yoon2026p3}
connects naturally to this result: online covariance estimation can
tighten the reported stochastic floor
$\gamma_w\sqrt{\tr(\Sigmaw)}$ without changing the MPPI sample threshold
$M^*$.

\bibliographystyle{IEEEtran}
\bibliography{references}

\end{document}